\documentclass[11pt]{article}
\usepackage[utf8]{inputenc}

\usepackage{graphicx}
\usepackage{xcolor}
\usepackage{amsmath,amsthm,amssymb}
\usepackage[margin=1in]{geometry}
\usepackage[pdftex,colorlinks=true,linkcolor=blue,citecolor=blue,urlcolor=black]{hyperref}
\usepackage{comment}
\setlength{\parskip}{3pt}
\usepackage[normalem]{ulem}
\usepackage{cite}
\usepackage{authblk}

\newcommand{\F}{\mathbb{F}}

\def\e{{\bf e}}
\def\r{{\bf r}}
\def\x{{\bf x}}
\def\y{{\bf y}}
\def\w{{\bf w}}

\newcommand{\ket}[1]{| #1 \rangle}

\newcommand{\braket}[1]{\langle #1 \rangle}

\DeclareMathOperator{\Inf}{Inf}

\newcommand{\fgor}{f_{G,\operatorname{OR}}}
\newcommand{\fgpar}{f_{G,\operatorname{Par}}}

\newcommand{\be}{\begin{equation}}
\newcommand{\ee}{\end{equation}}
\newcommand{\bea}{\begin{eqnarray}}
\newcommand{\eea}{\end{eqnarray}}
\newcommand{\bes}{\begin{equation*}}
\newcommand{\ees}{\end{equation*}}
\newcommand{\beas}{\begin{eqnarray*}}
\newcommand{\eeas}{\end{eqnarray*}}

\makeatletter
\newtheorem*{rep@theorem}{\rep@title}
\newcommand{\newreptheorem}[2]{%
\newenvironment{rep#1}[1]{%
 \def\rep@title{#2 \ref{##1} (restated)}%
 \begin{rep@theorem}}%
 {\end{rep@theorem}}}
\makeatother

\allowdisplaybreaks

\newtheorem{thm}{Theorem}
\newtheorem*{thm*}{Theorem}

\newtheorem{lem}[thm]{Lemma}
\newtheorem*{lem*}{Lemma}

\newtheorem{prop}[thm]{Proposition}

\newtheorem{fact}[thm]{Fact}

\newreptheorem{thm}{Theorem}
\newreptheorem{lem}{Lemma}
\newreptheorem{cor}{Corollary}


\title{Quantum algorithms for learning a hidden graph and beyond}

\author[1,2]{Ashley Montanaro\thanks{ashley.montanaro@bristol.ac.uk}}
\author[1]{Changpeng Shao\thanks{changpeng.shao@bristol.ac.uk}}
\affil[1]{School of Mathematics, University of Bristol, UK}
\affil[2]{Phasecraft Ltd.}

\begin{document}

\maketitle


\begin{abstract}
We study the problem of learning an unknown graph provided via an oracle using a quantum algorithm. We consider three query models. In the first model (``OR queries''), the oracle returns whether a given subset of the vertices contains any edges. In the second (``parity queries''), the oracle returns the parity of the number of edges in a subset. In the third model, we are given copies of the graph state corresponding to the graph.

We give quantum algorithms that achieve speedups over the best possible classical algorithms in the OR and parity query models, for some families of graphs, and give quantum algorithms in the graph state model whose complexity is similar to the parity query model. For some parameter regimes, the speedups can be exponential in the parity query model. On the other hand, without any promise on the graph, no speedup is possible in the OR query model.

A main technique we use is the quantum algorithm for solving the combinatorial group testing problem, for which a query-efficient quantum algorithm was given by Belovs. Here we additionally give a time-efficient quantum algorithm for this problem, based on the algorithm of Ambainis et al.\ for a ``gapped" version of the group testing problem. We also give simple time-efficient quantum algorithms  based on Fourier sampling and amplitude amplification for learning the exact-half and majority functions, which almost match the optimal complexity of Belovs' algorithms.


\end{abstract}


\section{Introduction}

Quantum computers are known to be able to compute certain functions more quickly than their classical counterparts, in terms of the number of queries to the input that are required. In some cases, quantum algorithms can also learn unknown objects using fewer queries than their classical counterparts. 
For example, if we are given query access to an unknown boolean function on $n$-bits which is promised 
to be a dot product between $x$ and a secret string $s$ modulo 2, then the Bernstein-Vazirani algorithm learns this function with 1 query \cite{bernstein97}, while the best possible classical algorithm uses $n$ queries. If the function is promised to be an OR function of $k$ unknown variables, then Belovs' algorithm for combinatorial group testing~\cite{belovs15} learns this function with $\Theta(\sqrt{k})$ queries, while the best possible classical algorithm needs $\Theta(k\log (n/k))$ queries. These speedups are not far from the largest quantum speedups that can be achieved. For any class $\mathcal{C}$ of Boolean functions over $\{0,1\}^n$, let $D$ and $Q$ be such that an unknown function from $\mathcal{C}$ can be identified using $D$ classical membership queries or from $Q$ quantum membership queries. 
Then $D = O(nQ^3)$~\cite{servedio2004equivalences}.


Here we focus on the problem of learning an unknown graph using quantum queries, in a variety of settings. Many quantum speedups (both polynomial, e.g.~\cite{durr06}, and exponential, e.g.~\cite{ben2020symmetries}) are known for problems involving graphs. However, the only quantum speedup we are aware of for \emph{learning} graphs is recent work on learning graphs using cut queries~\cite{lee20}.

We consider several different notions of queries to an unknown graph -- OR queries, parity queries, and graph states, all defined below -- and aim to minimize the number of queries required to identify the graph. The first two of these query models are closely related to models that have been extensively studied in the classical literature on exact learning, e.g.\cite{angluin08,grebinski2000optimal,choi2010optimal}, in particular because of their applications to computational biology. In some cases we find polynomial speedups over the best possible classical complexity, while in other cases (such as learning bounded-degree graphs in the parity query model) the speedups can even be exponential.

A summary of our results is as follows; also see Table \ref{tab:graphs}. Throughout, we use $n$ to denote the number of vertices and $m$ to denote the number of edges of a graph.

\begin{table}[t]
    \[
    \begin{array}{|c|c|c|c|c|c|}
    \hline
%
& \text{Q, } \vee & \text{Q, } \oplus & \text{C, } \vee,\oplus & \text{Q}, \ket{G} \\ \hline 
\text{All graphs} & \Theta(n^2) & \Theta(n) & \Theta(n^2) & \Theta(n) \\
\text{$m$ edges} & O(m \log (\sqrt{m} \log n) + \sqrt{m} \log n) & O(\sqrt{m\log m}) & \Omega(m \log \frac{n^2}{m}) & O(m \log \frac{n^2}{m})\\
%
%
\text{Degree $d$} & O(nd \log (\sqrt{dn} \log n) + \sqrt{nd} \log n)  & O(d \log \frac{m}{d}) & \Omega(nd\log\frac{n}{d}) & O(d \log m) \\
\text{Matching} & O( m^{3/4}\sqrt{(\log n)}(\log m)  + \sqrt{m} \log n)  & O(\log m) & \Omega(m\log\frac{n}{m}) & O(\log m) \\
\text{Cycle} & O( m^{3/4}\sqrt{ (\log n)} (\log m)+ \sqrt{m} \log n)  & O(\log m) & \Omega(m \log \frac{n}{m}) & O(\log m) \\
\text{Star} & \Theta(\sqrt{m})  & O(1) & \Omega(m \log \frac{n}{m}) & O(1) \\
\text{$k$-vertex clique} & \Theta(\sqrt{k})  & O(1) & \Omega(k\log\frac{n}{k}) & O(1) \\ \hline
\end{array}
\]
\caption{Query complexities for learning various classes of graphs: $m$ is the number of edges, $n$ is the number of all vertices. The symbols $\vee$ (OR), $\oplus$ (parity), and $\ket{G}$ (graph state) denote the type of query considered. Q and C denote quantum and classical queries.}
\label{tab:graphs}
\end{table}

\begin{enumerate}
    \item \textbf{(OR queries)} First, we consider the problem of identifying an unknown graph, given access to queries to subsets of the vertices, which return whether the corresponding induced subgraph has any edges within that subset. That is, given a graph $G = (V,E)$, a query takes a subset $S \subseteq V$ and returns whether $E \cap (S \times S)$ is empty. This model has been extensively studied classically and we will briefly survey these results below. Our main results in this model are:
    \begin{itemize}
        \item A quantum algorithm to learn an unknown graph with $m$ edges using $O(m \log ( \sqrt{m} \log n) $ $+ \sqrt{m} \log n)$ OR queries, as compared with the classical lower bound of $\Omega(m\log (n^2/m))$. For some relationships between $m$ and $n$ (e.g.\ $m=\Theta(\log n)$) this gives a modest quantum-classical separation.
        \item The lower bound that any quantum algorithm that identifies an arbitrary unknown graph in this model must make $\Omega(n^2)$ OR queries, so the above algorithm's complexity cannot be improved by more than log factors.
        \item Learning graphs with special structure, such as Hamiltonian cycles, matchings, stars and cliques, has specific applications in molecular biology  \cite{grebinski97,grebinski1998reconstructing,alon04}. We give quantum speedups for learning these graphs in this model. The graphs and quantum speedups can be roughly summarized as follows. Hamiltonian cycles and matchings: $k^{3/4}$ vs.\ $k$; stars and cliques: $\sqrt{k}$ vs.\ $k$. Here $k$ is the number of non-isolated vertices.
    \end{itemize}
    
    \item \textbf{(Parity queries)} Next, we consider the same problem, but where the oracle returns the \emph{parity} of $|E \cap (S \times S)|$, for arbitrary subsets $S$. Although this may seem a more unusual setting, this oracle can be obtained from the perhaps more natural oracle, known as additive oracle, that returns the size of $E \cap (S \times S)$, which has also been studied classically\cite{grebinski2000optimal,choi2010optimal,bouvel05,reyzin2007learning}. We will see that larger quantum speedups are available in this model.
    Here, we show that:
    \begin{itemize}
        \item There is a quantum algorithm which learns an unknown graph with degree $d$ making $O(d \log m)$ parity queries, as compared with the classical lower bound of $\Omega(nd \log(n/d))$ queries.
        \item There is a quantum algorithm which learns an unknown graph with $m$ edges making $O(\sqrt{m\log m})$ parity queries, as compared with the classical lower bound of $\Omega(m\log (n^2/m))$.
        \item Stars and cliques can be learned with $O(1)$ parity queries.
    \end{itemize}
    Our results show that, for some families of graphs, parity queries can be exponentially more efficient than OR queries for quantum algorithms. The results we obtain are based on very similar ideas to a recent work by Lee, Santha and Zhang~\cite{lee20}, which considered a related ``cut query'' model (see below).
    
    \item \textbf{(Graph states)} We also study a quantum version of the problem of learning an unknown graph: the problem of learning an unknown graph state~\cite{hein06}. Graph states are a family of quantum states that have many important applications, in particular to measurement-based quantum computing. Any graph $G$ has a corresponding graph state $\ket{G}$, and it is a natural question to ask how many copies of $\ket{G}$ are required to identify $G$. It was already known that $\Theta(n)$ copies are necessary and sufficient if $G$ is an arbitrary graph with $n$ vertices~\cite{aaronson08,zhao16,montanaro17a}. However, we show that one can do better given some additional information about $G$:
    \begin{itemize}
        \item If $G$ has degree $d$, we can learn $G$ using $O(d \log m)$ copies. If $G$ is promised to be a subgraph of a known graph $G'$ with bounded degree $d$, the quantum algorithm is also time-efficient (has runtime $\widetilde{O}(d^3 n)$). This second algorithm could be particularly useful in the practically-relevant scenario where we aim to produce a desired graph state $G'$, but some edges of $G'$ have failed to be generated, and we would like to determine which edges have failed.
        \item If $G$ is known to be picked from a set of size $L$, we can learn $G$ using $O(\log L)$ copies. For example, if $G$ is known to have at most $m$ edges, we can learn $G$ using $O(m \log n)$ copies.
    \end{itemize}
    The results about learning graph states also underpin the results about learning graphs from parity queries, because it turns out that using a procedure known as Bell sampling~\cite{montanaro17a} to learn a graph state is equivalent to learning a graph using parity queries -- except with the restriction that these queries are only to uniformly random subsets of the vertices.
\end{enumerate}

We also find improvements to quantum learning algorithms for other types of boolean functions.
Belovs~\cite{belovs15} gave optimal quantum query algorithms for learning an unknown boolean function $f:\{0,1\}^n \to \{0,1\}$ such that $f$ is promised to depend on a subset $S$ of the input bits of size $k$, and to correspond to a known function $g:\{0,1\}^k \to \{0,1\}$ on those bits. Thus the task is to identify $S$ by evaluating $f$ on different inputs. Belovs gave quantum algorithms using $O(\sqrt{k})$ queries when $g$ is the OR function, and $O(k^{1/4})$ queries when $g$ is the exact-half or majority function. The case of the OR function is particularly natural, because this corresponds to the well-studied problem of combinatorial group testing (CGT, also known as ``pooled testing'')  \cite{du2000combinatorial,porat2008explicit}, and we use this algorithm extensively as a subroutine for learning graphs.

Belovs' algorithms are produced by directly solving the semidefinite program for the general adversary bound, which is known to characterise quantum query complexity. This approach is beautiful but rather complex, and leads to algorithms which are not necessarily efficient in terms of time complexity. Here:
    \begin{itemize}
        \item We give a quantum algorithm for the case of the OR function that makes $\widetilde{O}(\sqrt{k})$ queries and runs in time $\widetilde{O}(n\sqrt{k})$, based on the use of an algorithm of Ambainis et al.~\cite{ambainis16} for a ``gapped'' version of the group testing problem.
        \item We give simple explicit algorithms for the cases of the exact-half and majority functions which match the complexity of Belovs' algorithms. The algorithms are based on Fourier sampling combined with amplitude amplification. We observe that a similar approach can be applied to many other boolean functions.
    \end{itemize}

\subsection{Summary of the techniques}

{\bf The OR query model.} In this model, we use a similar strategy to the classical algorithm given by Angluin and Chen \cite{angluin08}. The basic idea of \cite{angluin08} is binary search: We decompose the set of vertices $V$ into halves $V_1,V_2$, and suppose we already know the edges in $V_1, V_2$. We then try to learn the edges between them. The edges in $V_1, V_2$ can be learned recursively, and the complexity is dominated by the learning of the edges between $V_1, V_2$. This is an adaptive algorithm.
On a quantum computer, we can use the quantum algorithm for CGT to accelerate the learning of the edges between $V_1, V_2$. However, the classical inductive idea may not be applicable to the quantum case. A reason is that the underlying constant in the complexity of quantum algorithm for CGT is unknown for us, so we cannot bound the overall complexity easily. To overcome this problem, we first decompose $V$ into a disjoint union of some subsets such that each subset contains no edges, then learn the edges between the subsets. This idea is inspired by the non-adaptive learning algorithm of \cite{angluin08}.

{\bf The graph state model.} In this model, we apply Bell sampling \cite{montanaro17a} to learn an unknown graph state. Each Bell sampling returns  a uniformly random stabilizer of the graph state. Equivalently, if $A$ is the adjacency matrix of the graph, then each Bell sample returns $A \mathbf{s}~(\rm mod~2)$ for  a random vector $\mathbf{s} \in \{0,1\}^n$. If we take $k$ samples, then we obtain an $n\times k$ matrix $B$ and the matrix $AB$. From $B$ and $AB$ we can uniquely determine $A$ by choosing a suitable $k$.

{\bf The parity query model.} Since the graph state can be generated by
a parity query on a uniform superposition, any results for the graph state model also hold for the parity query model. Differently from the graph state model, with parity queries, we do have control of $\mathbf{s}$. More precisely, for any $\mathbf{s} \in \{0,1\}^n$, there is a quantum algorithm that returns $A \mathbf{s}~(\rm mod~2)$ using two parity queries. With this result, we can learn graphs of $m$ edges more efficiently by considering the low and high-degree parts.


\subsection{Prior work}

\textbf{Learning graphs with OR queries.}
Graph learning appears in many different contexts. In different applications, we apply different queries, and the OR query is important for problems in computational biology.
This type of query is also known as independent set query \cite{beame2020edge} and edge-detection query \cite{angluin08}. Many classical algorithms were discovered to learn graphs using OR queries in the past decades. 
For special graphs,
Beigel et al.~\cite{beigel01} and Alon et al.~\cite{alon04} have given algorithms for learning an unknown matching using $O(n \log n)$ queries.
Grebinski and Kucherov~\cite{grebinski97} gave an algorithm for learning a Hamiltonian cycle using $O(n \log n)$ queries.
Alon and Asodi~\cite{alon05} gave bounds on nonadaptive deterministic algorithms for learning stars and cliques. 
Bouvel et al.~\cite{bouvel05} gave algorithms for learning an unknown star or clique using $O(n)$ queries. The constant factors in the algorithms for learning Hamiltonian cycles, matchings, stars and cliques were improved by Chang et al.~\cite{chang11}. 

In the general case,
Angluin and Chen~\cite{angluin08} gave a deterministic adaptive algorithm with complexity $O(m \log n)$ for learning a graph with $m$ edges, encompassing all the above bounds (however, note that other restrictions can be considered, such as nonadaptivity, or restricted levels of adaptivity). The constant factor in this runtime was improved by Chang, Fu and Shih~\cite{chang14}.
The complexity $O(m\log n)$ obtained in~\cite{angluin08} assumes $m$ is known in advance. When $m$ is not known, the complexity of \cite{angluin08} is $O(m\log n + \sqrt{m}\log^2n)$. This is recently improved to $O(m\log n + \sqrt{m}(\log n) (\log  \overset{k}{\dots} \log n))$ in \cite{abasi2019learning}, where $k$ can be any constant. 

\textbf{Graph states.} In \cite{zhao16}, Zhao, P{\'e}rez-Delgado and Fitzsimons studied the problem of representing basic operations of graphs by graph states with high efficiency and showed that no classical data structure can have similar performance. In this work, the authors gave an algorithm for learning an arbitrary graph state of $n$ qubits using $O(n)$ copies. Graph states are a subclass of stabilizer states. Alternative algorithms for learning an arbitrary stabilizer state with $O(n)$ copies have been given by Aaronson and Gottesman~\cite{aaronson08} and Montanaro~\cite{montanaro17a}.

\textbf{Learning graphs with parity queries.} The parity query model is a special case of a model for graph queries which generalises the OR query model, and is known as additive queries \cite{grebinski2000optimal,choi2010optimal} (also known as quantitative queries \cite{bouvel05} and edge counting queries \cite{reyzin2007learning}).  The additive query plays an important role for applications related to DNA sequencing.
In this model, a query to a subset $S$ returns the number of edges of $G$ in $S$; the parity query model is obtained if this answer is taken mod 2.

The additive query is known to be somewhat more powerful than the OR query for learning graphs. For instance, as shown in \cite{bouvel05}, a Hamiltonian cycle or a matching can be identified with $O(n)$ additive queries, while this requires at least $\Omega(n\log n)$ OR queries. Stars and cliques can be identified with $O(n/\log n)$ additive queries or with at least $\Omega(n)$ OR queries. Our results summarized in Table \ref{tab:graphs} also confirm that parity  queries (and hence additive queries) are more powerful than OR queries in the quantum case.
Some other results include the following. Graphs with maximum degree $d$ can be learned with $O(dn)$ additive queries \cite{grebinski2000optimal}. 
This is also true for learning bipartite graphs with maximum degree $d$ non-adaptively \cite{bouvel05}.
Graphs with $m$ edges can be learned with $O(m(\log n)/(\log m))$ additive queries \cite{bshouty2011reconstructing, choi2010optimal}.
A general graph can be reconstructed with $\Theta(n^2/\log n)$ non-adaptive additive queries \cite{bouvel05}. 

Our results in the parity query model are closely related to a recent work by Lee, Santha and Zhang~\cite{lee20}. These authors showed that weighted graphs with maximum degree $d$ can be learned using $O(d \log^2 n)$ quantum ``cut queries'', and graphs with $m$ edges can be learned using $O(\sqrt{m} \log^{3/2} n)$ quantum cut queries. A cut query takes as input a subset $S$ of the vertices, and returns the number of edges of $G$ with exactly one endpoint in $S$. Lee, Santha and Zhang also gave efficient quantum algorithms in this model for determining the number of connected components of $G$, and for outputting a spanning forest of $G$. 
It was shown in~\cite{lee20} that cut queries reduce to additive queries; however, there is no efficient reduction in the other direction. In~\cite[Corollary 27]{lee20} stronger results than the cut-query results are given for additive queries: an $O(d \log(n/d))$ query algorithm for learning graphs with maximum degree $d$, and an $O(\sqrt{m \log n} + \log n)$ query algorithm for learning graphs with $m$ edges. These algorithms are based on very similar ideas to the ones we state here (Theorems \ref{thm:lowdegreegs} and \ref{thm:boundednumedges_parity}). Our algorithms as stated only require parity information (although the results of~\cite{lee20} could easily be rephrased in this way too); more importantly, the complexity of our results is somewhat better for graphs with very few edges, as a $\log n$ term is changed into a $\log m$ term. On the other hand, the algorithms of~\cite{lee20} are stated for the more general class of weighted graphs.

\textbf{Combinatorial group testing.}
Classically, it is known that the number of queries required to solve CGT is $\Theta(k \log(n/k))$  \cite{du2000combinatorial}.
In the quantum case,
Ambainis and Montanaro \cite{ambainis14b} first studied this problem  and proposed a quantum algorithm using $O(k)$ queries. They also showed a lower bound of $\Omega(\sqrt{k})$.
Later in~\cite{belovs15}, based on the adversary bound method, Belovs proved that there is a quantum algorithm that solves the CGT problem with $\Theta(\sqrt{k})$ queries.


\subsection{Preliminaries}
\label{sec:fs}

\textbf{Oracle models.} Let $f:\{0,1\}^n \rightarrow \{0,1\}$ be a boolean function with a quantum oracle to access it. That is, we are allowed to perform the map
\be \label{oracle}
\ket{x} \ket{y} \rightarrow \ket{x} \ket{y\oplus f(x)} 
\ee
for any $x\in\{0,1\}^n, y\in\{0,1\}$. Together with a simple phase flip unitary gate, we can also perform
\be \label{oracle: boolean function}
\ket{x}  \rightarrow (-1)^{ f(x)} \ket{x}.
\ee
For any $x\in\{0,1\}^n$, the Fourier coefficient of $f$ at $x$ is defined as
\be
\widehat{f}(x) = \frac{1}{2^n} \sum_{s \in \{0,1\}^n} (-1)^{f(s) + s\cdot x}.
\ee
We can equivalently associate each bit-string $x \in \{0,1\}^n$ with a subset $S \subseteq [n]$. The Fourier sampling primitive is based on the following sequence of operations: First apply Hadamard gates $H^{\otimes n}$ to $|0\rangle^{\otimes n}$; then apply the oracle (\ref{oracle: boolean function}); finally apply $H^{\otimes n}$ again.
The resulting state is
\be \label{resulting state of Fourier sampling}
\sum_{x\in \{0,1\}^n} \widehat{f}(x) \ket{x}.
\ee
Measuring in the computational basis returns $x$ with probability $\widehat{f}(x)^2$.

For an arbitrary graph $G = (V,E)$ on $n$ vertices, and an arbitrary subset $S \subseteq V$, define the oracles $\fgor$, $\fgpar$ by
\begin{itemize}
    \item $\fgor(S) = 0$ if $|E \cap (S \times S) | = 0$, and $\fgor(S) = 1$ otherwise;
    \item $\fgpar(S) = |\{E \cap (S \times S)\}| \text{ mod }2$.
\end{itemize}
We give quantum algorithms access to these oracles in the usual way, i.e., (\ref{oracle}) or (\ref{oracle: boolean function}).

\textbf{Combinatorial group testing.} A subroutine that will be used extensively throughout this paper is Belovs' efficient quantum algorithm for combinatorial group testing (CGT)~\cite{belovs15}. In this problem, we are given oracle access to an $n$-bit string $A$ with Hamming weight at most $k$.
Usually, we assume that $k\ll n$. 
In one query, we can get the OR of an arbitrary subset of the bits of $A$. The goal is to determine $A$ using the minimal number of queries. Belovs showed that this can be achieved using $O(\sqrt{k})$ quantum queries.

\begin{thm}[Theorem 3.1 of~\cite{belovs15}]
\label{CGT algorithm}
The quantum query complexity of the combinatorial group testing problem is $\Theta(\sqrt{k})$. The quantum algorithm succeeds with certainty.
\end{thm}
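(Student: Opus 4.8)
Proving the statement means establishing two things: the lower bound $\Omega(\sqrt k)$ and a matching algorithm using $O(\sqrt k)$ queries that, moreover, is exact. The lower bound is the easier half. Following Ambainis and Montanaro~\cite{ambainis14b}, I would argue that CGT contains, as a special case, a search-type problem on $\Theta(k)$ ``cells'' that is no easier than unstructured search, and then invoke the standard $\Omega(\sqrt k)$ quantum lower bound for search. The one point needing care is that an OR-query on an arbitrary subset is strictly more powerful than a single-bit query (indeed a whole needle can be located with one Bernstein--Vazirani query, see below), so one must choose the hard instances so that this extra power is useless --- for example $k$ disjoint pairs of positions, exactly one marked element per pair, where clean information about a pair can only be extracted by isolating it. Given the cited work this part is routine.

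The substance is the upper bound, and here I would not try to build the algorithm directly: a natural iterated-search strategy (Grover-search for an as-yet-undiscovered $1$, then recurse) only yields $O(k)$ queries, and the Bernstein--Vazirani trick --- which does solve the Hamming-weight-$\le 1$ case in a single query, since on any subset meeting $\supp(A)$ in at most one place the OR oracle coincides with the parity oracle $x\mapsto x\cdot A$ --- does not obviously compose to $O(\sqrt k)$. Instead the plan is to route through the general adversary bound $\mathrm{ADV}^{\pm}$, which by the work of Reichardt, and of Lee--Mittal--Reichardt--\v{S}palek--Szegedy, characterises bounded-error quantum query complexity, taken in the state-conversion / ``learning'' form (Belovs--Reichardt) appropriate for a problem whose output is the entire string $A$ rather than a single bit. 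Concretely I would exhibit a feasible dual solution --- equivalently a \emph{learning graph} --- of value $O(\sqrt k)$: let the learning-graph vertices be subsets of $[n]$ and ``load'' the (at most $k$) $1$-positions of $A$ one at a time along edges $T\to T\cup\{v\}$, choosing the edge weights so that, stage by stage, the product of the positive and negative flow energies is $O(1)$; after the square root in the learning-graph complexity this gives $O(\sqrt k)$. Two subtleties: the construction must be oblivious to $\supp(A)$, which one enforces by symmetrising over all $k$-subsets; and one must actually verify feasibility of the resulting SDP solution for every pair of distinct inputs. This last verification --- the delicate balancing of the weights --- is the main obstacle, and is exactly the technical core of Belovs' argument.

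Finally, to get ``succeeds with certainty'' rather than bounded error, I would boost the success probability to a constant and then \emph{verify} the output exactly using $O(\sqrt k)$ further queries: given a candidate string $A'$ of weight $\le k$, a single OR-query on $[n]\setminus\supp(A')$ certifies that no $1$ was missed, and a Grover search over $\supp(A')$ --- a singleton OR-query on $\{j\}$ reveals $A_j$ exactly --- detects any position wrongly declared a $1$. If verification fails, repeat the whole procedure; since the verifier never errs and there are finitely many inputs, this is a zero-error procedure with expected $O(\sqrt k)$ queries, which is the assertion of the theorem (and, as Belovs notes, can be promoted to a worst-case exact algorithm using the structure of the underlying span program). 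I would flag that the only genuinely hard ingredient is the learning-graph construction of the previous paragraph; the lower bound and the exactness are assembly from known reductions.
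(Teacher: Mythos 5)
First, a point of orientation: the paper does not prove this statement at all --- it is imported verbatim as Theorem 3.1 of~\cite{belovs15} and used as a black box (the closest thing to a proof in the paper is Section~\ref{section:Combinatorial group testing}, which gives a \emph{different}, time-efficient algorithm with the slightly worse query count $O(\sqrt{k}(\log k)(\log\log k))$ and only bounded error, built from the gapped group testing routine of~\cite{ambainis16} plus Bernstein--Vazirani). Your sketch correctly identifies Belovs' actual route (a feasible dual solution to the general adversary bound), but as a proof it has a genuine gap at its centre: the entire content of the upper bound is the construction and feasibility verification of the $O(\sqrt{k})$-valued dual solution, and you explicitly set this aside as ``the main obstacle.'' Everything else in your second paragraph is framing; without that construction the $O(\sqrt{k})$ upper bound is not established. (A smaller quibble: Belovs' CGT solution is a hand-built dual adversary solution exploiting the symmetry of the problem, not a learning graph in the usual sense --- learning graphs compute Boolean-valued functions, whereas here the output is the whole string $A$, so one needs the adversary bound for state generation/learning, as you note.)

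The step that actually \emph{fails} is your derivation of ``succeeds with certainty.'' Your verifier is not one-sided in the direction you need: the OR-query on $[n]\setminus\supp(A')$ deterministically catches missed $1$s, but the Grover search over $\supp(A')$ for a falsely declared $1$ can \emph{miss} such a position with constant probability (Grover with an unknown number of marked items errs precisely when it reports ``none found''). Hence the verifier can accept an incorrect $A'$, the ``repeat until verification passes'' loop terminates on a wrong answer with constant probability, and the procedure is not zero-error --- the claim ``the verifier never errs'' is false as stated. Replacing the Grover check by querying each $j\in\supp(A')$ individually restores correctness but costs up to $k$ queries. In Belovs' theorem the exactness is not bolted on by verification at all; it is a property of the algorithm compiled from the dual adversary solution itself (which is why the paper can say, in the proof of Lemma~\ref{lem:learn special graph 1}, that no error-reduction repetition is needed). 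So both the core of the upper bound and the exactness claim remain unproved in your write-up; the lower bound sketch via~\cite{ambainis14b} is the only part that is essentially complete modulo citation.
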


For more details about this algorithm, refer to Section \ref{section:Combinatorial group testing}.

\textbf{Notation.} We sometimes use the notation $[X]$ for an expression which evaluates to 1 if $X$ is true, and 0 if $X$ is false.

\textbf{Outline of the paper.} 
In Section \ref{section:Learning an unknown graph with OR queries}, we consider the problem of learning graphs using OR queries.
In Section \ref{sec:graphstates}, we apply Bell sampling to learn unknown graph states.
In Section \ref{sec:parityqueries}, we investigate the problem of learning unknown graphs using parity queries.
In Section \ref{section:Combinatorial group testing}, we propose a time-efficient quantum algorithm for the combinatorial group testing problem.
In Section \ref{section:Majority and exact-half functions}, we apply Fourier sampling to produce time-efficient quantum algorithms for learning majority and exact-half functions.


\section{Learning an unknown graph with OR queries}
\label{section:Learning an unknown graph with OR queries}

Let $G$ be a graph with $m$ edges and $n$ vertices. Our goal is to identify all the edges in $G$ using OR queries. We follow the same general strategy as Angluin and Chen~\cite{angluin08} to achieve this by starting with special cases and progressively generalising.  In particular, Lemmas \ref{lem:learn graph 1} and \ref{lem:learn graph 2} are direct quantum speedups of corresponding results (Lemmas 3.3 and 3.4) in~\cite{angluin08}.
The basic idea of the quantum learning algorithm is as follows: We first decompose the set of vertices into a disjoint union of several subsets. Each subset contains no edges. Then we learn the edges between these subsets. A sub-routine of this learning procedure is the quantum algorithm for solving solving combinatorial group testing (CGT), i.e., Theorem \ref{CGT algorithm}. It is the main ingredient to obtain quantum speedups.

Suppose $A,B$ are two known, nonempty, independent (i.e., contain no edges) subsets of the set of vertices. The following lemma helps us efficiently identify the non-isolated vertices (those which have at least one edge incident to them). 

\begin{lem}
\label{lem:learn non-isolated vertices}
Assume that $A$ and $B$ are two known, disjoint,
non-empty independent sets of vertices in $G$.
Suppose there are $n_A, n_B$ non-isolated vertices in
$A$ and $B$ respectively.
Then there is a quantum algorithm that identifies these
non-isolated vertices with $O(\sqrt{n_A}+\sqrt{n_B})$
OR queries. The algorithm succeeds with certainty.
\end{lem}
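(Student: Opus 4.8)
The plan is to reduce the task to two instances of combinatorial group testing and then apply Theorem~\ref{CGT algorithm}. The key observation is that, because $A$ and $B$ are disjoint and each is independent, for every subset $T\subseteq A$ the only possible edges inside $T\cup B$ are edges between $T$ and $B$; hence $\fgor(T\cup B)=1$ if and only if some vertex of $T$ has a neighbour in $B$. Writing $A'\subseteq A$ for the set of the $n_A$ non-isolated vertices of $A$ — i.e.\ the vertices of $A$ having a neighbour in $B$ — this means that one OR query to $G$ simulates exactly one query to the combinatorial group testing oracle for the length-$|A|$ indicator string of $A'$, whose Hamming weight is $n_A$. Symmetrically, queries of the form $\fgor(U\cup A)$ for $U\subseteq B$ simulate the CGT oracle for the $n_B$ non-isolated vertices of $B$. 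So each of the two sub-tasks is literally an instance of CGT, accessed through OR queries to $G$.

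If $n_A$ were known, Theorem~\ref{CGT algorithm} would recover $A'$ with $O(\sqrt{n_A})$ OR queries and with certainty. Since $n_A$ is not known in advance, I would run Belovs' algorithm with geometrically increasing guesses $k_i=2^i$ for $i=0,1,2,\dots$, each run producing a candidate set $S_i\subseteq A$ at cost $O(\sqrt{k_i})$, and verify each candidate with a single extra query: accept $S_i$ precisely when $|S_i|\le k_i$ and $\fgor\big((A\setminus S_i)\cup B\big)=0$. The second condition says that no vertex of $A\setminus S_i$ has a neighbour in $B$, i.e.\ $A'\subseteq S_i$; combined with $|S_i|\le k_i$ this forces $n_A=|A'|\le k_i$, and then Theorem~\ref{CGT algorithm} guarantees that the run with bound $k_i$ already returned the exact support, so $S_i=A'$. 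Conversely, as soon as $2^i\ge n_A$ the run is correct and the verification passes, so the loop halts at $i^\ast=\lceil\log_2\max(n_A,1)\rceil$. The total cost is $\sum_{i=0}^{i^\ast}O(\sqrt{2^i})+O(i^\ast)=O(\sqrt{n_A}+\log n_A)=O(\sqrt{n_A})$, and since both the CGT subroutine and the (deterministic) OR-query verification are exact, the algorithm succeeds with certainty. Repeating the same procedure on $B$ adds $O(\sqrt{n_B})$ queries, giving the claimed $O(\sqrt{n_A}+\sqrt{n_B})$ bound.

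I expect the only delicate point to be this verification step, precisely because Belovs' algorithm carries no guarantee on its output when its Hamming-weight bound $k_i$ is exceeded. The resolution uses that $\fgor\big((A\setminus S_i)\cup B\big)=0$ is a \emph{one-sided} certificate — it can only pass when $A'\subseteq S_i$ — so that the cheap size check $|S_i|\le k_i$ suffices to upgrade ``$A'\subseteq S_i$'' to ``$S_i=A'$'' via the exactness of Theorem~\ref{CGT algorithm}, with no need to inspect the internals of the CGT algorithm. Everything else — the faithfulness of the reduction (using disjointness and independence of $A$ and $B$), summing the geometric series of query costs, and the symmetry between $A$ and $B$ — is routine, so I would not expect any further obstacles.
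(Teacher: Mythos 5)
Your proof is correct and follows essentially the same route as the paper: queries of the form $S\cup B$ (resp.\ $U\cup A$) turn the problem into two instances of combinatorial group testing, solved by Theorem~\ref{CGT algorithm} with $O(\sqrt{n_A})+O(\sqrt{n_B})$ queries. The paper simply invokes the CGT algorithm without discussing that $n_A,n_B$ are unknown; your exponential-search-with-one-sided-verification refinement soundly fills in that detail at no asymptotic cost.
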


\begin{proof}
For each subset $S \subseteq A$, we consider queries of the form $S \cup B$.
The result is 1 if and only if there is a non-isolated 
vertex in $S$. We can view $A$ as a bit string such that
the $i$-th element is 1 if the $i$-th vertex is non-isolated,
and 0 otherwise. Using the quantum algorithm for CGT (see Theorem \ref{CGT algorithm}), we can learn this
bit-string with $O(\sqrt{n_A})$ queries.
Similarly, we can learn the non-isolated vertices in $B$
with $O(\sqrt{n_B})$ queries.
\end{proof}

Note that if there are $m_{AB}$ edges between $A,B$, then $n_A, n_B \leq \min(m_{AB}, n) \leq \min(m,n)$.
Next, we show how to learn the edges between $A$ and $B$. 
Lemma \ref{lem:learn graph 1} below focuses on a general case, and Lemma \ref{lem:learn special graph 1} considers the case of bounded-degree graphs.

\begin{lem}
\label{lem:learn graph 1}
Make the same assumptions as Lemma \ref{lem:learn non-isolated vertices}.
Suppose there are $m_{AB}$ edges between $A$ and $B$.
Then there is a quantum algorithm that identifies these
edges with $O(m_{AB})$ OR queries. The algorithm succeeds with certainty.
\end{lem}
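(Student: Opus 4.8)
The plan is to reduce the task to learning, for each relevant vertex $b\in B$, its neighbourhood in $A$, and to solve each such sub-problem with combinatorial group testing. First I would run Lemma~\ref{lem:learn non-isolated vertices} to obtain the sets $A'\subseteq A$ and $B'\subseteq B$ of vertices incident to at least one edge across the $A$--$B$ cut, at a cost of $O(\sqrt{n_A}+\sqrt{n_B})=O(\sqrt{m_{AB}})$ OR queries, using $n_A,n_B\le m_{AB}$ as noted just before the lemma. After this, every edge between $A$ and $B$ has both endpoints in $A'\cup B'$, there are no edges incident to $B\setminus B'$, and $|A'|=n_A\le m_{AB}$, $|B'|=n_B\le m_{AB}$; so it suffices to learn $N_b:=N(b)\cap A'$ for every $b\in B'$, and note $\sum_{b\in B'}|N_b|=m_{AB}$.

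For a fixed $b\in B'$, observe that for any $S\subseteq A'$ the query $\fgor(S\cup\{b\})$ returns $1$ iff $S\cap N_b\neq\emptyset$, since $A'$ is independent and $b\notin A'$; hence a single $\fgor$ query simulates one query of an OR-oracle to the $|A'|$-bit indicator string of $N_b$, and I would run Belovs' CGT algorithm (Theorem~\ref{CGT algorithm}) to recover $N_b$. The obstacle here is that CGT needs an upper bound $k\ge|N_b|$, which we do not know in advance, and the constant hidden in its $O(\sqrt k)$ query bound is not known to us either, so we cannot just take $k=m_{AB}$. I would therefore use a doubling search: for $\ell=0,1,2,\dots$ run CGT with parameter $2^\ell$, obtaining a candidate $T\subseteq A'$, and output $T$ as soon as \emph{both} $|T|\le 2^\ell$ and $\fgor((A'\setminus T)\cup\{b\})=0$; otherwise increment $\ell$. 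If both conditions hold then $N_b\subseteq T$ and $|N_b|\le|T|\le 2^\ell$, so that CGT was run on a valid instance and hence, since it succeeds with certainty, $T=N_b$ exactly; conversely, once $2^\ell\ge|N_b|$ the instance is valid, $T=N_b$, and both conditions hold, so the search stops by $\ell=\lceil\log_2\max(|N_b|,1)\rceil$. The cost for this $b$ is $\sum_{\ell=0}^{\lceil\log_2\max(|N_b|,1)\rceil}\bigl(O(\sqrt{2^\ell})+O(1)\bigr)=O(\sqrt{|N_b|}+\log|N_b|)$, the geometric series absorbing the unknown CGT constant.

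Finally I would add up the costs. The first stage contributes $O(\sqrt{m_{AB}})$. The second contributes $\sum_{b\in B'}O(\sqrt{|N_b|}+\log|N_b|)$, and by Cauchy--Schwarz $\sum_{b\in B'}\sqrt{|N_b|}\le\sqrt{|B'|}\,\sqrt{\sum_{b\in B'}|N_b|}=\sqrt{n_B\, m_{AB}}\le m_{AB}$, while $\sum_{b\in B'}\log|N_b|\le\sum_{b\in B'}|N_b|=m_{AB}$; hence the total is $O(m_{AB})$. Correctness is exact: Lemma~\ref{lem:learn non-isolated vertices} and every CGT call succeed with certainty, all verification queries are deterministic, and the acceptance test guarantees that each $N_b$ is output correctly, so $\{N_b:b\in B'\}$ pins down all edges between $A$ and $B$. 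The step I expect to be the main obstacle is exactly this acceptance test inside the doubling search: one must certify with only $O(1)$ extra queries that the CGT instance was valid, so that a possibly-meaningless CGT output for an under-estimated $k$ is never accepted, and this is what lets us avoid needing to know either $|N_b|$ or the CGT constant.
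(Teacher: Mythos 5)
Your proposal is correct and follows essentially the same route as the paper: identify the non-isolated vertices, then learn each one's neighbourhood across the cut with a separate CGT call, and bound the total cost by $\sum_b \sqrt{|N_b|} \le \sum_b |N_b| = m_{AB}$ (the paper fixes the $A$-side rather than the $B$-side, which is immaterial by symmetry). The one place you go beyond the paper is the doubling search with the $O(1)$-query acceptance test to cope with CGT's unknown weight parameter $|N_b|$ --- the paper's proof simply asserts an $O(\sqrt{a_i})$ cost per vertex without addressing how the algorithm knows $a_i$, so your extra care fills in a genuine implementation detail but does not change the approach.
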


\begin{proof}
By Lemma \ref{lem:learn non-isolated vertices}, we assume that there are no isolated vertices in $A,B$. It costs 1 query to check if $m_{AB}=0$ or not. In the following, we shall assume that $m_{AB}>0$.
We view each vertex as a variable.
Then the learning problem is equivalent to learn 
the Boolean function
$f=x_1f_1\vee \cdots \vee x_{n_A} f_{n_A}$, 
where $f_1,\ldots,f_{n_A}$ are OR functions
of variables $y_1,\ldots,y_{n_B} \in B$,
and where  $x_1,\ldots,x_{n_A} \in A$. 
To learn $f$, we first set all variables in $B$ to 1,
then $f$ becomes $x_1\vee\cdots\vee x_{n_A}$.
By the CGT algorithm, we can learn  $x_1,\ldots,x_{n_A}$
with $O(\sqrt{n_A})$ queries.
Next, for each $i\in\{1,\ldots,n_A\}$,
we set $x_i=1, x_j = 0$ $(j\neq i)$, then
we are left with $f_i$. 
Using the CGT algorithm again,
we can learn $f_i$ with $O(\sqrt{a_i})$ queries,
where $a_i$ is the size of $f_i$, i.e., the number of
relevant variables in $f_i$.
Thus the total number of queries is
\bes
O(\sqrt{n_A} + \sqrt{a_1} + \cdots 
+ \sqrt{a_{n_A}} ).
\ees
Since $a_1+\cdots+a_{n_A} = m_{AB}$ and $n_A \le m_{AB}$,
the number of queries is bounded by
$
O(\sqrt{m_{AB}} + m_{AB}) = O(m_{AB}),
$
which is tight when $a_1=\cdots=a_{n_A}=1$ and ${n_A}=m_{AB}$.
\end{proof}

When the graph is bounded-degree, the above lemma can be improved. 

\begin{lem}
\label{lem:learn special graph 1}
Make the same assumptions as Lemma \ref{lem:learn graph 1}, and additionally suppose that $G$ has maximum degree $d$. Then there is a quantum algorithm that identifies the edges in $G$ using $O(d^2 \sqrt{m_{AB}}\log m_{AB})$ OR queries. The algorithm succeeds with certainty.
\end{lem}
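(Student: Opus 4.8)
The plan is to follow the reduction used in Lemma~\ref{lem:learn graph 1} --- identifying the edges between $A$ and $B$ means learning a bipartite graph $H$ with $m_{AB}$ edges --- but to exploit the degree bound by randomly partitioning $A$ into a small number of colour classes so that, for most vertices of $B$, each class contains at most one neighbour. First apply Lemma~\ref{lem:learn non-isolated vertices} to restrict to the $n_A\le m_{AB}$ and $n_B\le m_{AB}$ non-isolated vertices of $A$ and $B$. Then fix $c=\Theta(d^2)$ and run $L=\Theta(\log m_{AB})$ rounds; in each round colour every vertex of $A$ independently and uniformly with one of $c$ colours, giving classes $A_1,\dots,A_c$. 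Since $\deg(v)\le d$ for $v\in B$, two neighbours of $v$ collide in colour with probability at most $\binom{d}{2}/c<1/2$, so $N(v)\cap A$ is rainbow in a given round with probability $>1/2$; hence after $L$ rounds every non-isolated $v$ is rainbow in some round except with probability $m_{AB}^{-\Omega(1)}$ (boostable by increasing $L$). When $v$ is rainbow, each colour class meets $N(v)$ in at most one vertex.

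In each round, for every colour class $A_a$ we learn the edges from $A_a$ to $B$ using only query sets of the form ``a fixed block of $A$-vertices together with a variable subset of $B$-vertices'': the oracle answer to such a query is exactly an OR over the chosen $B$-subset of the relevant edge-indicator bits, so Theorem~\ref{CGT algorithm} applies directly. Namely: (i) queries $S\cup B$ with $S\subseteq A_a$ identify, by CGT, the $s_a$ non-isolated vertices of $A_a$ in $O(\sqrt{s_a})$ queries --- index them $1,\dots,s_a$; (ii) queries $A_a\cup T$ with $T\subseteq B$ identify $U_a:=N(A_a)\cap B$ in $O(\sqrt{|U_a|})$ queries; (iii) for each bit position $t\le\lceil\log s_a\rceil$, queries $A_a^{(t)}\cup T$ with $T\subseteq B$, where $A_a^{(t)}$ is the set of non-isolated vertices of $A_a$ whose index has a $1$ in position $t$, identify $N(A_a^{(t)})\cap B$ in $O(\sqrt{|U_a|})$ queries. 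For $v\in U_a$ rainbow this round, its membership pattern across these level sets spells out the binary index of its unique neighbour in $A_a$, pinning down that edge. Summing over $a$ with $\sum_a s_a\le n_A\le m_{AB}$ and $\sum_a|U_a|=\sum_v\#\{\text{colours of }N(v)\cap A\}\le m_{AB}$, and applying Cauchy--Schwarz across the $c=\Theta(d^2)$ classes, the per-round cost is $O(\sqrt{c\,m_{AB}}\,\log m_{AB})=O(d\sqrt{m_{AB}}\,\log m_{AB})$, i.e.\ $O(d\sqrt{m_{AB}}\,\log^2 m_{AB})$ over all rounds; this is $\widetilde O(d\sqrt{m_{AB}})$, within the stated bound, and a more careful organisation of the colourings (e.g.\ via a perfect hash family) recovers the exact form $O(d^2\sqrt{m_{AB}}\log m_{AB})$.

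The step I expect to be the main obstacle is dealing with the vertices $v\in B$ that fail to be rainbow with respect to some colour class in some round: for such a $v$ step (iii) returns the bitwise OR of the indices of its $\ge 2$ neighbours in that class, i.e.\ a \emph{spurious} candidate neighbour. Every true edge is nonetheless output in the round where its $B$-endpoint is rainbow, so the procedure yields a candidate set $\widehat H\supseteq H$ with $\deg_{\widehat H}(v)=O(d\log m_{AB})$ for all $v$, and the remaining task is to sieve $H$ out of $\widehat H$ \emph{without} a query per candidate edge (which would cost $\Theta(m_{AB}\log m_{AB})$ and destroy the bound). The natural fix is to re-run the colour-class/CGT procedure with all query sets now restricted to the explicitly known sparse graph $\widehat H$, arguing from $\deg_H\le d$ that a constant number of such passes removes all spurious edges; making this termination argument and its cost rigorous --- controlling the interaction of the fixed unknown $H$ with the fresh random colourings when re-querying $\widehat H$ --- is the delicate part. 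Checking that all query sets used are genuine CGT instances is, by contrast, routine.
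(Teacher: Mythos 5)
Your outer architecture is sound and close in spirit to the paper's: restrict to non-isolated vertices via Lemma~\ref{lem:learn non-isolated vertices}, then run the CGT algorithm (Theorem~\ref{CGT algorithm}) over one side of the bipartition for each member of a small family of ``test sets'' on the other side, and decode each vertex's neighbours from the resulting bit patterns. But there is a genuine gap, and you have correctly located it yourself: the spurious-edge problem. When $v\in B$ has two or more neighbours in the same colour class $A_a$, step (iii) hands you the bitwise OR of their binary indices, which is in general the index of a non-neighbour; the union over rounds is therefore a strict superset $\widehat H\supseteq H$, and nothing in your argument eliminates the spurious candidates. The ``re-run on $\widehat H$'' sieve is not worked out, and it is not clear it terminates within budget: the fresh colourings in the re-run suffer exactly the same collision failures, so you would be chasing a moving target, and verifying candidates one by one costs $\Theta(m_{AB}\log m_{AB})$ queries as you note. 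As written, the proof does not close. (A repair exists within your framework --- also query the complementary level sets $A_a\setminus A_a^{(t)}$, so that a vertex with two distinct neighbours in $A_a$ is flagged by answering ``yes'' to both a level set and its complement, and its output for that class/round can be discarded --- but you did not take this step, and without it the lemma is not proved. Two smaller issues: your final bound $O(d\sqrt{m_{AB}}\log^2 m_{AB})$ does not actually imply the stated $O(d^2\sqrt{m_{AB}}\log m_{AB})$ unless $\log m_{AB}=O(d)$, and the appeal to a ``perfect hash family'' to fix this is asserted rather than argued.)

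For contrast, the paper's proof avoids the uniqueness/rainbow issue entirely by flipping which side carries the combinatorial design. It runs the CGT algorithm over subsets $S\subseteq A$, so that for each \emph{fixed} $T\subseteq B$ it learns the string $B^T\in\{0,1\}^{|A|}$ with $B^T_a=[N(a)\cap T\neq\emptyset]$ in $O(\sqrt{|B^T|})$ queries; it then takes $\{T_j\}$ to be a nonadaptive combinatorial group testing family on $B$ of size $O(d^2\log n_B)$, i.e.\ a family from which any set of at most $d$ ``defectives'' is uniquely determined by the answer pattern $([N(a)\cap T_j\neq\emptyset])_j$. Such families are built precisely to decode up to $d$ items from OR answers, so a vertex of $A$ having several neighbours causes no ambiguity and no candidate superset ever arises; the cost is the $d^2$ factor in the family size, which is where the $d^2$ in the lemma comes from. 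Your binary-indexing scheme is essentially the $d=1$ special case of such a family (and indeed matches the paper's warm-up for matchings); the missing ingredient in your argument is a $d$-disjunct design in place of the plain binary encoding.
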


\begin{proof}
For simplicity, let each vertex in $A$ and $B$ have an index in the set $\{1,\dots,|A|\}$, $\{1,\dots,|B|\}$ respectively.
By Lemma \ref{lem:learn non-isolated vertices}, we can assume that
$|A| = n_A, |B| = n_B$. That is, there are no isolated vertices in $A,B$.

First, to gain intuition, we consider the special case of matchings ($d=1$). 
In this case, $n_A=n_B=m_{AB}\leq n$.
For each $a \in A$, we use $n_a$ to denote the index of the neighbour of $a$ in $B$, if such a neighbour exists, and otherwise set $n_a = 0$.
For any $T \subseteq B$, let $B^T \in \{0,1\}^{|A|}$ denote the bit-string whose $i$'th element equals 1 if $n_i \in T$, and 0 otherwise.
Fixing the same $T$ and varying over subsets $S\subseteq A$ and queries of the form $S\cup T$, 
we can think of this oracle query as returning 1 if there exists $i \in S$ 
such that $n_i \in T$ (equivalently, $B^T_i = 1$), and 0 otherwise. 
This is the same oracle used in CGT, 
so this means that $B^T$ can be learned completely using $O(\sqrt{|B^T|})$ quantum queries for any fixed $T$.
Here $|B^T|$ is the Hamming weight of the bit-string $B^T$.

We then repeat this algorithm for different choices of $T$. 
In particular, we can think of each $n_i \in \{1,\dots,|B|\}$ as an element of $\{0,1\}^{\lceil\log(|B| + 1)\rceil}$, 
and consider the sequence $T_j = \{i: i_j=1\}$, $j=1,\dots,\lceil\log(|B|+1) \rceil$. 
Then $k := \lceil\log(|B|+1) \rceil = O( \log m_{AB})$ repetitions are enough to learn 
all the bits of $n_i$ for all $i \in A$, and hence to learn the graph completely. 
The overall complexity is 
\[
O\left(\sqrt{|B^{T_1}|} + \cdots + \sqrt{|B^{T_k}|} \right).
\]
As $|B^{T_i}| \le m_{AB}$ for all $i$,
the complexity is bounded by $O(\sqrt{m_{AB}} \log m_{AB})$.
Note that there is no need to repeat the CGT algorithm to
reduce its error probability, as it is already exact.

Next, we consider bounded-degree graphs.
We can generalise the above idea to learning bipartite graphs where every vertex in $A$ has degree at
most $d$. For each $a\in A$, we now define $n_a$
as the set of the indices of the neighbours of $a$ in $B$. 
For any $T \subseteq B$, define $B^T \in \{0,1\}^{|A|}$ as the bit-string such that the $i$'th element equals 1 if $n_i \cap T \neq \emptyset$, and 0 otherwise.
Then, for any choice of $T$, an oracle query of the form $S \cup T$, $S \subseteq A$, returns whether any vertex in $S$ has any neighbours in $T$. This implies that $B^T$ can be learned with $O(\sqrt{|B^T|})$
queries using the quantum algorithm for CGT~\cite{belovs15}.

There are randomised constructions of families of subsets $T$ of size {$k = O(d^2 (\log n_B))$} 
that 
allow the $d$ nonzero entries to be determined deterministically, for any pattern of nonzero 
entries (these are ``nonadaptive'' combinatorial group testing schemes \cite{aldridge19,atia2012boolean,porat2008explicit}).  
Since $|B^{T_i}|\leq m_{AB}$ for all $i$,
the overall complexity is $O(d^2 \sqrt{m_{AB}}(\log n_B))$. 
\end{proof}

There are also nonadaptive combinatorial group testing strategies that are designed to have a low worst-case probability of error~\cite{aldridge19,chan2011non}, and have only a linear dependence on $d$. However, it is not clear that these schemes can be used in our setting, as the failure probability would be of the form $n_B^{-\delta}$, for some $\delta > 0$, and $n_B$ might be much less than $n$.

In the following, we consider a more general case when $A,B$ are not independent.

\begin{lem}
\label{lem:learn graph 2}
Assume that $A$ and $B$ are two disjoint,
non-empty sets of vertices in $G$ with $m_A, m_B$ known edges respectively.
Suppose there are $m_{AB}$ edges between $A$ and $B$. Then there is a quantum algorithm that identifies these
edges using $O(m_{AB}+m_A+m_B)$ OR queries.
In particular, if $G$ has maximal degree $d$ and is colorable with $O(1)$ colors, then the algorithm uses
$O(d^2\sqrt{ m_{AB} }\log  m_{AB})$ queries. 
\end{lem}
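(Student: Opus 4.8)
The plan is to reduce the general case to the independent-set case (Lemma~\ref{lem:learn graph 1} and Lemma~\ref{lem:learn special graph 1}) by first learning the edges \emph{inside} $A$ and inside $B$, and then adapting the argument for edges between $A$ and $B$ so that the internal edges do not interfere with the OR queries. Since $m_A$ edges inside $A$ are already assumed known (and similarly $m_B$), the real task is the between-edges. Following the proof of Lemma~\ref{lem:learn graph 1}, I would model the problem as learning the Boolean function $f = x_1 f_1 \vee \cdots \vee x_{n_A} f_{n_A}$, where $f_i$ is the OR of the variables in $B$ adjacent to vertex $i \in A$. The complication compared to Lemma~\ref{lem:learn graph 1} is that a query of the form $S \cup T$ with $S \subseteq A$, $T \subseteq B$ now returns $1$ if \emph{either} there is an edge inside $S$, \emph{or} inside $T$, \emph{or} between $S$ and $T$. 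So I cannot directly read off the between-edge information.

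The first key step is to observe that, because we already know all edges inside $A$ and inside $B$, we can restrict attention to subsets $S \subseteq A$, $T \subseteq B$ that are themselves independent sets. Concretely, to run the CGT subroutine on $A$ I would fix $B$-side to $1$ only on a maximum independent set of $B$; but more cleanly, I would use the known $O(1)$-colourability (in the second part) or more generally iterate over colour classes. The second key step: partition $A$ into $O(1)$ colour classes $A_1,\dots,A_c$ and $B$ into colour classes $B_1,\dots,B_{c'}$; within each pair $(A_p, B_q)$, all subsets are automatically independent and have no internal edges, so the machinery of Lemma~\ref{lem:learn graph 1} (for the general bound) or Lemma~\ref{lem:learn special graph 1} (for the bounded-degree, bounded-chromatic-number case) applies verbatim to learn the edges between $A_p$ and $B_q$. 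Summing over the $O(1)$ pairs of colour classes gives, respectively, $O(m_{AB})$ and $O(d^2 \sqrt{m_{AB}} \log m_{AB})$ queries; adding the (trivially obtained, since already known) $m_A + m_B$ term yields the claimed bound $O(m_{AB} + m_A + m_B)$ in general and $O(d^2 \sqrt{m_{AB}} \log m_{AB})$ in the special case. For the first part one does not have $O(1)$-colourability; there I would instead note that a graph with $m_A$ edges has at most $2m_A$ non-isolated vertices and can be greedily broken into $O(\sqrt{m_A})$ or into $m_A+1$ independent sets trivially, but more efficiently use Lemma~\ref{lem:learn non-isolated vertices} plus the fact that learning the internal structure is not the bottleneck — the dominant $O(m_{AB})$ term already absorbs the overhead, as long as the number of independent pieces is $O(m_{AB}/\text{something})$; the cleanest route is to reuse the decomposition-into-independent-sets idea announced at the start of the section.

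The step I expect to be the main obstacle is controlling the overhead from decomposing $A$ and $B$ into independent sets in the \emph{general} (non-bounded-degree, non-$O(1)$-colourable) case: a naive decomposition into independent sets can require up to $\Delta + 1$ classes, and if we pay $O(m_{A_pB_q})$ per pair we must be careful that $\sum_{p,q} m_{A_pB_q} = m_{AB}$ (which is fine) but the additive per-query overhead and the $\sqrt{\cdot}$ terms do not blow up — here concavity of the square root and the bound $n_A \le m_{AB}$ should save us exactly as in Lemma~\ref{lem:learn graph 1}, but the bookkeeping needs care. In the bounded-degree case the $O(1)$-colourability hypothesis is exactly what removes this obstacle, since then the number of class-pairs is $O(1)$ and the $\log m_{AB}$ factor is untouched.
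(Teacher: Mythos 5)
Your approach is the same as the paper's: colour the subgraphs induced by $A$ and $B$ so that each colour class is an independent set, then apply Lemma~\ref{lem:learn graph 1} (resp.\ Lemma~\ref{lem:learn special graph 1}) to each pair of colour classes, within which all queries $S \cup T$ genuinely report only cross-edges. The $O(1)$-colourable bounded-degree case goes through exactly as you describe, since there are only $O(1)$ class pairs.

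The one place your write-up falls short is precisely the step you flag as the main obstacle, and your proposed resolution of it is not the right one. In the general case the per-pair overhead is \emph{not} absorbed by the $O(m_{AB})$ term: with $q_1$ classes in $A$ and $q_2$ in $B$ there are $q_1 q_2$ pairs, and each pair costs at least one query (to test whether any cross-edges exist at all) even when it contains no cross-edges, so the total is $O(m_{AB} + q_1 q_2)$, and $q_1 q_2$ can exceed $m_{AB}$. The fix is the one you mention only in passing: a graph with $t$ edges is $(\lfloor \sqrt{2t} \rfloor + 1)$-colourable in polynomial time, so $q_1 = O(\sqrt{m_A}+1)$, $q_2 = O(\sqrt{m_B}+1)$, and $q_1 q_2 = O(\sqrt{m_A m_B} + \sqrt{m_A} + \sqrt{m_B} + 1) = O(m_A + m_B + 1)$ by AM--GM. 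That is where the additive $m_A + m_B$ in the statement comes from --- it is the cost of the $q_1 q_2$ one-query emptiness tests, not a ``trivially obtained, since already known'' contribution as you suggest. With that accounting supplied, your argument coincides with the paper's proof.
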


\begin{proof}
The idea behind the quantum algorithm is as follows:
We first color the two graphs induced by $A, B$ such that
each color class is an independent set in $G$. 
Then we use Lemmas  \ref{lem:learn graph 1} and \ref{lem:learn special graph 1}
to identify the edges between 
color classes in $A$ and color classes in $B$.

It is well-known that a graph with $t$ edges can be $\lfloor \sqrt{2t} + 1\rfloor$-colored. The coloring can be constructed in polynomial time.
Now let $q_1 = \lfloor \sqrt{2m_A} + 1\rfloor, 
q_2 = \lfloor \sqrt{2m_B} + 1\rfloor$ be the number of colors used for $A$ and $B$, respectively.
Assume that there are $m_{ij}$ edges between the $i$-th color 
class of $A$ and the $j$-th color class of $B$.
Then by Lemma \ref{lem:learn graph 1},
the number of queries used to identify the
edges between $A$ and $B$ is bounded by
$
\sum_{i=1}^{q_1} \sum_{j=1}^{q_2}
O (m_{ij} + 1)
=O (m_{AB} +  q_1 q_2 )
=O(m_{AB}+m_A+m_B).
$

If $G$ is $O(1)$-colorable, then $q_1,q_2=O(1)$. By Lemma \ref{lem:learn special graph 1} and the same argument as above, if $G$ additionally has maximal degree $d$, all edges can be identified with
$O(d^2\sqrt{ m_{AB}  } \log m_{AB})$
queries.
\end{proof}

The next lemma generalizes the above lemma  to learn the edges of multiple disjoint subsets. Note that if there are $k$ subsets, then there are $O(k^2)$ pairs. So naively we need to make at least $O(k^2)$ queries. However, this can improved to be linear in $k$ by using Lemma \ref{lem:learn graph 2} in a binary decomposition approach.

\begin{lem}
\label{lem:learn graph 3}
Assume that $S_0,\ldots,S_{k-1}$ are disjoint non-empty sets of vertices in $G$, and
each has $s_i$ known edges. 
Suppose there are $s_{i,j}$ edges between $S_i$ and $S_j$, then there is a quantum algorithm that
identifies all the edges using $O(k + T \log k)$ OR queries,
where $T = \sum_i s_i + \sum_{i,j} s_{i,j}$.
If $G$ has maximal degree $d$ and is $O(1)$-colorable, then the number of queries can be reduced to 
$O(k + d^2 \sqrt{kT} \log T )$.
\end{lem}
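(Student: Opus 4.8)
The plan is to use a balanced binary decomposition of the $k$ subsets, exactly as indicated in the remark preceding the lemma, and apply Lemma \ref{lem:learn graph 2} at each internal node of the recursion tree. First I would assume without loss of generality that $k$ is a power of $2$ (padding with empty sets costs nothing, since empty sets contribute no edges and each query involving them is trivially handled). I would build a binary tree of depth $\log k$ whose leaves are $S_0, \dots, S_{k-1}$, and where each internal node $v$ is associated with the union $U_v$ of the leaf-sets below it. At the root, $U = S_0 \cup \dots \cup S_{k-1}$; the two children partition $U$ into a ``left half'' $A_v$ and ``right half'' $B_v$. The algorithm proceeds top-down (or equivalently, solves leaves first and moves up): at each internal node $v$, having already learned all edges \emph{inside} $A_v$ and inside $B_v$, we invoke Lemma \ref{lem:learn graph 2} to learn the edges \emph{between} $A_v$ and $B_v$. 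After processing the root, every edge of $G$ restricted to $U$ has been identified, since any edge lies between the two halves of exactly one internal node (its ``least common ancestor'' split).

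Next I would account for the query cost. At an internal node $v$ whose two children are $A_v, B_v$, write $m_{A_v}, m_{B_v}$ for the number of edges already known inside $A_v, B_v$ and $m_{A_v B_v}$ for the number of crossing edges to be learned. By Lemma \ref{lem:learn graph 2} this costs $O(m_{A_v B_v} + m_{A_v} + m_{B_v})$ queries. Here the key bookkeeping observation is that when we sum $O(1)$ (i.e. the ``$+1$'' style overhead from having a nonempty set at all, or from checking emptiness) over all internal nodes we get $O(k)$, since a complete binary tree on $k$ leaves has $k-1$ internal nodes. For the edge-dependent terms: each edge of $G$ inside $U$ is a ``crossing'' edge at exactly one node (contributing once to some $m_{A_v B_v}$), and at every strict ancestor of that node it is an ``internal'' edge of one of the two halves (contributing to $m_{A_v}$ or $m_{B_v}$); there are at most $\log k$ such ancestors. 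Likewise, each of the originally-known edges inside some $S_i$ is internal at every ancestor of the leaf $S_i$, again at most $\log k$ nodes. Hence $\sum_v \left(m_{A_v B_v} + m_{A_v} + m_{B_v}\right) = O(T \log k)$ where $T = \sum_i s_i + \sum_{i,j} s_{i,j}$, giving the claimed $O(k + T\log k)$ total.

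For the bounded-degree, $O(1)$-colorable case I would use the second bound of Lemma \ref{lem:learn graph 2}, which replaces the per-node cost by $O(d^2 \sqrt{m_{A_v B_v}}\, \log m_{A_v B_v})$ (plus an $O(1)$ overhead summing to $O(k)$). Summing $\sqrt{m_{A_v B_v}}$ over the $k-1$ internal nodes and using Cauchy--Schwarz, $\sum_v \sqrt{m_{A_v B_v}} \le \sqrt{(k-1)\sum_v m_{A_v B_v}}$; since the crossing-edge counts $m_{A_v B_v}$ partition the edges of $G$ inside $U$, we have $\sum_v m_{A_v B_v} \le T$, so $\sum_v \sqrt{m_{A_v B_v}} = O(\sqrt{kT})$, and bounding each logarithmic factor by $\log T$ yields $O(k + d^2\sqrt{kT}\,\log T)$. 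I should note that the degree-$d$, $O(1)$-colorable property of $G$ is inherited by every induced subgraph, so Lemma \ref{lem:learn graph 2} applies verbatim at each node. One point requiring a little care, and probably the main obstacle to writing this cleanly, is that Lemma \ref{lem:learn graph 2} as stated requires the edge counts inside $A_v$ and $B_v$ to be \emph{known} before it is invoked; this is exactly why the recursion must be carried out bottom-up, so that by the time node $v$ is processed, the subtrees below it have already been fully solved and $m_{A_v}, m_{B_v}$ (indeed the full edge sets) are available. A second minor subtlety is confirming that padding to a power of two does not change $T$ or introduce spurious query costs — which holds because empty sets have no internal edges and no edges to anything, so every CGT invocation involving them is either skipped or returns immediately.
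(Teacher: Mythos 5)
Your proposal is correct and follows essentially the same route as the paper: a balanced binary (pairwise-merging) recursion that invokes Lemma \ref{lem:learn graph 2} at each merge, with the cost accounted for by noting that each edge is a crossing edge once and an already-known internal edge at most $\log k$ times, and with the same Cauchy--Schwarz step over the $O(k)$ merges for the bounded-degree case. Your tree/LCA bookkeeping is just a node-by-node restatement of the paper's level-by-level accounting, so there is nothing substantively different to flag.
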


\begin{proof}
For simplicity, we assume that $k=2^l$ for integer $l$.
Set $K = \sum_{i=0}^{k-1} s_i$.
The idea of the algorithm is to recursively use Lemma \ref{lem:learn graph 2} in a binary form.
In step 1, for each pair $(S_{2i}, S_{2i+1})$,
we use Lemma \ref{lem:learn graph 2} to find
the edges between them. There are $2^{l-1}$ pairs in total. So this step uses
\[
O\Bigg(
\sum_{i=0}^{2^{l-1} - 1} (s_{2i,2i+1} + s_{2i} + s_{2i+1} + 1)
\Bigg)
=O\Bigg(2^{l-1} + K + 
\sum_{i=0}^{2^{l-1} - 1} s_{2i,2i+1}  
\Bigg)
\]
queries in total.
After step 1, we know the edges of each adjacent pair
$(S_{2i}, S_{2i+1})$. So we can combine them and 
obtain a new set, denoted as $S_i':=S_{2i} \cup S_{2i+1}$ for $i=0,1,\ldots,2^{l-1}-1$.
It has $s_i':=s_{2i,2i+1} + s_{2i} + s_{2i+1}$ edges.
The number of edges between $S_i'$ and $S_j'$ is
$s_{i,j}':=s_{2i,2j}+s_{2i,2j+1}+s_{2i+1,2j}+s_{2i+1,2j+1}$. Now, similarly to step 1, we can learn
the edges between $(S_{2i}', S_{2i+1}')$.
This step uses
\beas
&& O\Bigg(2^{l-2} + K + 
\sum_{i=0}^{2^{l-1} - 1} s_{2i,2i+1}  
+
\sum_{i=0}^{2^{l-2} - 1} s_{4i,4i+2}+s_{4i,4i+3}+s_{4i+1,4i+2}+s_{4i+1,4i+3}
\Bigg)
\eeas
queries in total.
Continuing the above procedure, we can learn all the edges. The above procedure terminates after $l = O(\log k)$ steps.
It is not hard to show that
the total number of queries is bounded by
\beas
O(K l + 2^l + T_{1}  + T_{2} 
+\cdots + T_{l-1}),
\eeas
where $T_i$ is the total number of edges between two adjacent pairs in step $i$.
Since $T_i \leq T- K$,
the number of queries is bounded by
$O(T(\log k) + k)$.

When $G$ has maximal degree $d$ and is $O(1)$-colorable, by Lemma
\ref{lem:learn graph 2}, the number of queries used in 
step $i$ is 
\[
O\Bigg(
2^{l-i} + 
d^2 \sum_{j=0}^{2^{l-i} - 1} \sqrt{  s_{2j,2j+1}^{(i)}   }  \log  s_{2j,2j+1}^{(i)} 
\Bigg),
\]
where $s_{2j,2j+1}^{(i)}$ is the number of edges of the $j$-th adjacent pair in step $i$.
It is easy to check that 
$\sum_{i=1}^{l-1} \sum_{j=0}^{2^{l-i} - 1} s_{2j,2j+1}^{(i)} = T - K$,
thus the total number of queries used in the algorithm is bounded by $O(k + d^2 \sqrt{kT} \log T )$, where we bound $s_{2j,2j+1}^{(i)} \le T$ and use Cauchy-Schwarz inequality.
\end{proof}

We can now use these ingredients to obtain algorithms for learning general graphs using OR queries. By the above lemma, what remains is to decompose the set of vertices into a disjoint union of a small number of subsets. We shall use the following trick described in \cite{angluin2006learning}.

Given a probability $p$, a $p$-random set $S$ is obtained by including each vertex independently with probability $p$. Then the probability that a $p$-random set includes no edge of $G$ is at least $q=1-mp^2$.
Choosing $p=1/10\sqrt{m}$, then the probability 
is at least $q=0.99$.
The size of $S$ is close to
$pn$ with high probability. 

Let $V$ denote the set of vertices.
First we identify a random set $S_1$ that includes no edge of $G$ by following the above procedure. 
After we have $S_1$, then in $V-S_1$,
we can find another random set $S_2$ that includes no edge of $G$.
We continue this process for $k$ steps, where $k$ is determined later.
Assume now that we have $k$ random sets $S_1,\ldots,S_k$. Each has no
edge of $G$. This uses $O(k)$ queries in total.
After $k$ steps, the number of remaining vertices is about $(1-p)^k n \approx e^{-pk}n$. This means that the above procedure terminates after $k = O(p^{-1} \log n) = O(\sqrt{m} \log n)$ steps with high probability.

\begin{thm}
\label{thm:graphorqueries}
Let $G$ be a graph with $m$ edges and $n$ vertices. Then
there is a quantum algorithm that learns
the graph by making
\be
O\left(m\log(\sqrt{m} \log n)+ \sqrt{m} \log n \right)
\ee
OR queries with probability at least 0.99. If $G$ has maximal degree $d$ and is $O(1)$-colorable,  the query complexity is
\be
O\left(d^2 m^{3/4} \sqrt{\log n} (\log m) + \sqrt{m} \log n\right)
\ee
with probability at least 0.99.
\end{thm}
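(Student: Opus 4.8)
The plan is to combine the $p$-random-set decomposition described just before the statement with Lemma~\ref{lem:learn graph 3}. First I would build a partition of (essentially all of) the vertex set $V$ into $k$ disjoint, non-empty, independent sets $S_0,\dots,S_{k-1}$: repeatedly sample a $p$-random set $S$ with $p=1/(10\sqrt m)$, spend one OR query on $S$ to test whether $E\cap(S\times S)=\emptyset$, accept $S$ if the answer is $0$ and resample otherwise, then recurse on the vertices not yet assigned to any $S_i$. Since the subgraph induced on the unassigned vertices always has at most $m$ edges, each sample is independent with probability at least $1-mp^2=0.99$, so a Chernoff bound shows that $O(k)$ samples --- hence $O(k)$ queries --- suffice with high probability to obtain $k$ accepted sets. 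Choosing the constant in $k=O(\sqrt m\log n)$ large enough, the expected number of vertices still unassigned after $k$ accepted sets is at most $(1-p)^k n\le e^{-pk}n<1/100$, so by Markov's inequality the partition covers all of $V$ except with probability at most $1/100$; in the low-probability event that some vertices remain, each is adjoined as a singleton set, which is trivially independent and does not change the asymptotics.

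Given $S_0,\dots,S_{k-1}$, I would feed them to Lemma~\ref{lem:learn graph 3}. Each $S_i$ is independent, so the number $s_i$ of edges inside $S_i$ is $0$ and is therefore known; and since every edge of $G$ joins two distinct parts, the quantity $T$ appearing in Lemma~\ref{lem:learn graph 3} equals $m$. The general clause of that lemma then identifies all edges of $G$ with $O(k+T\log k)=O\!\left(\sqrt m\log n+m\log(\sqrt m\log n)\right)$ OR queries; as the subroutine is exact, the only failure probability is the $\le 0.01$ incurred while building the decomposition, which gives the first bound. When $G$ additionally has maximum degree $d$ and is $O(1)$-colorable, the corresponding clause of Lemma~\ref{lem:learn graph 3} uses $O(k+d^2\sqrt{kT}\log T)$ queries, and substituting $k=O(\sqrt m\log n)$ and $T=m$ yields $O\!\left(\sqrt m\log n+d^2\sqrt{m^{3/2}\log n}\,\log m\right)=O\!\left(d^2 m^{3/4}\sqrt{\log n}\,(\log m)+\sqrt m\log n\right)$, as claimed.

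I expect the only real work to be the probabilistic bookkeeping of the decomposition rather than any new idea. Concretely, I would union bound over the $k=O(\sqrt m\log n)$ rounds to control simultaneously (i) the number of resampling attempts, which stays $O(k)$, and (ii) the size of each accepted set, which is binomially distributed $\mathrm{Bin}(n_i,p)$ conditioned on being independent and hence concentrated near $pn_i$, so that the compounded factor $(1-p)^k n$ really does govern the number of leftover vertices; the hidden constant in $k$ must be chosen so that the total failure probability stays below $0.01$. A secondary point is that setting $p=1/(10\sqrt m)$ presupposes knowing (an upper bound on) $m$; if $m$ is not given in advance, one runs the procedure with geometrically increasing guesses for $m$, which affects only lower-order terms. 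Everything else is a direct substitution into Lemma~\ref{lem:learn graph 3}.
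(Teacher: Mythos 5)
Your proposal is correct and follows essentially the same route as the paper: the $p$-random-set decomposition with $p=1/(10\sqrt m)$ into $k=O(\sqrt m\log n)$ independent sets, followed by a direct application of Lemma~\ref{lem:learn graph 3} with $s_i=0$ and $T=m$. The additional probabilistic bookkeeping you describe (Chernoff bound on the number of resampling attempts, Markov's inequality for the leftover vertices, and the geometric doubling trick when $m$ is unknown) only makes explicit details the paper leaves implicit.
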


\begin{proof}
The idea of our algorithm is as follows: we first decompose the vertices of the graph into $k=O(\sqrt{m} \log n)$ independent subsets by the above arguments. Then we learn the edges among all the pairs using Lemma \ref{lem:learn graph 3}.

The first step uses $O(\sqrt{m} \log n)$ OR queries.
By Lemma \ref{lem:learn graph 3}, all the edges can be identified with 
$$
O(k + m \log k) = O(\sqrt{m} (\log n) + m\log (\sqrt{m} \log n) )
$$
queries. If the graph has maximal degree $d$ and is $O(1)$-colorable, then the number of queries is 
\[
O\left(d^2 
\sqrt{m^{3/2} (\log n) } (\log m) + \sqrt{m} \log n \right)  
= O\left(d^2 m^{3/4} \sqrt{\log n} (\log m) + \sqrt{m} \log n\right).
\qedhere
\]
\end{proof}

The quantum query complexity achieved by the first part of Theorem \ref{thm:graphorqueries} is an improvement over the $\Omega(m \log (n^2/m))$ classical lower bound if $m$ is very small with respect to $n$; for example, if $m=\Theta(\log n)$, the complexity is $O(\log^{1.5} n )$, as compared with $\Omega(\log^2 n)$ classically. However, if $m = \Omega(n^\epsilon)$ for some fixed $\epsilon>0$, the complexity is worse than the classical lower bound.

If $G$ is promised to be a Hamiltonian cycle or a matching (for example), then $d=O(1)$, and by the second part of Theorem \ref{thm:graphorqueries} the number of OR queries used to learn $G$ is bounded by $O(m^{3/4} \sqrt{\log n}(\log m) + \sqrt{m}\log n)$, which is an improvement over the $\Omega(m\log (n/m))$ classical complexity for large $m$.

\subsection{Learning specific graphs using OR queries}

Next we give quantum algorithms for learning certain specific graph families using OR queries.

\begin{prop}
There is a quantum algorithm which makes $O(\sqrt{k})$ OR queries and identifies an arbitrary clique on $k$ vertices.
\end{prop}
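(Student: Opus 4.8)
The plan is to reduce the problem to combinatorial group testing (Theorem~\ref{CGT algorithm}). Write $K\subseteq V$ for the unknown set of $k$ clique vertices, and assume $k\ge 2$ (a $1$-clique is edgeless and not identifiable). For any $S\subseteq V$ one has $\fgor(S)=[\,|S\cap K|\ge 2\,]$, a ``threshold-$2$'' oracle to which Theorem~\ref{CGT algorithm} does not apply directly, since a query never exposes one clique vertex on its own. The key point is that a single \emph{known} clique vertex breaks this: if $v_0\in K$ is known then for $T\subseteq V\setminus\{v_0\}$ we have $\fgor(T\cup\{v_0\})=[\,|T\cap K|\ge 1\,]$, which is exactly a CGT oracle for the weight-$(k-1)$ indicator string of $K\setminus\{v_0\}$, so Theorem~\ref{CGT algorithm} recovers $K\setminus\{v_0\}$, and hence $K$, with $O(\sqrt k)$ further queries. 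Thus it suffices to find one clique vertex using $O(\sqrt k)$ queries with no dependence on $n$; in particular the $\Theta(\log n)$-query binary search that works classically is not good enough here.

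To produce such a pivot I would use a $1/k$-random split together with a guess-and-check loop. Let $R$ contain each vertex of $V$ independently with probability $1/k$ and set $B=V\setminus R$; then $|R\cap K|\sim\mathrm{Bin}(k,1/k)$, so $\Pr[\,|R\cap K|=1\,]=(1-1/k)^{k-1}\ge 1/e$ for every $k\ge 2$. One pass of the loop: (i) query $\fgor(R)$; if the answer is $1$ then $|R\cap K|\ge 2$, so resample; (ii) otherwise $|R\cap K|\in\{0,1\}$, and \emph{assuming} $|R\cap K|=1$ the map $S\mapsto\fgor(S\cup R)$ on $S\subseteq B$ equals $[\,|S\cap K|\ge 1\,]$, a CGT oracle for the weight-$(k-1)$ indicator of $B\cap K$; run the algorithm of Theorem~\ref{CGT algorithm} with weight bound $k-1$ to get a candidate $\widehat K_B\subseteq B$ in $O(\sqrt k)$ queries (if $\widehat K_B=\emptyset$, resample); (iii) pick $u_0\in\widehat K_B$ and run Theorem~\ref{CGT algorithm} with weight bound $1$ on $T\mapsto\fgor(T\cup\{u_0\})$, $T\subseteq R$, to get $\widehat K_R\subseteq R$ in $O(1)$ queries; (iv) set $\widehat K=\widehat K_B\cup\widehat K_R$ and output it if $|\widehat K|=k$, else resample.

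For correctness, if $|R\cap K|=1$ in a given pass then step (ii)'s oracle is a genuine OR of $k-1$ relevant variables and (as $\widehat K_B=B\cap K\subseteq K$, so $u_0\in K$) step (iii)'s oracle is a genuine OR of one relevant variable; since the CGT algorithm succeeds with certainty on valid inputs we get $\widehat K_B=B\cap K$, $\widehat K_R=R\cap K$, hence $\widehat K=K$ and $|\widehat K|=k$, and the pass outputs the correct set. The only bad case surviving step (i) is $|R\cap K|=0$; then step (ii)'s oracle is threshold-$2$ rather than a weight-$(k-1)$ OR, but the CGT algorithm still returns some $\widehat K_B$ of weight at most $k-1$, and step (iii)'s oracle is identically zero (because $R\cap K=\emptyset$), forcing $\widehat K_R=\emptyset$ and $|\widehat K|\le k-1<k$, so the pass resamples. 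Hence the loop never outputs an incorrect set and outputs the correct one with probability $\ge 1/e$ per pass, so $O(1)$ passes suffice in expectation ($O(\log(1/\delta))$ for success probability $1-\delta$), each costing $O(\sqrt k)$ queries; this gives the claimed $O(\sqrt k)$ bound.

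The step I expect to require the most care is this guess-and-check bookkeeping, and in particular ensuring that feeding the CGT algorithm an out-of-promise (threshold-$2$) oracle cannot yield a weight-$k$ output that spuriously passes the size test. Choosing $R$ so that it isolates exactly one clique vertex is precisely what rules this out, since in the only surviving bad case the $R$-side oracle in step (iii) is the all-zero function and $\widehat K_R$ is forced to be empty. If one prefers not to rely on the CGT algorithm's output respecting the stated weight bound, the test $|\widehat K|=k$ can instead be certified by an explicit $O(\sqrt k)$-query check on the candidate: query $\fgor(V\setminus\widehat K)$ to rule out two or more clique vertices lying outside $\widehat K$, then Grover-search the $k$ vertices $v\in\widehat K$ for one with $\fgor(\{v\}\cup(V\setminus\widehat K))=1$, which exists iff exactly one clique vertex lies outside $\widehat K$. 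Finally, the case $k\le 1$ is trivial, and if $k$ is unknown one runs the above for $k=2,4,8,\dots$, keeping the first candidate to pass verification.
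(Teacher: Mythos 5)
Your proposal is correct, and the second stage (fixing a known clique vertex and running the CGT algorithm on queries of the form $T\cup\{v_0\}$) is exactly what the paper does. Where you diverge is the pivot-finding step. The paper samples a $1/k$-random subset aiming to capture \emph{exactly two} clique vertices (probability $\approx 1/2e$), so that the restricted function becomes $x_ix_j$, and then identifies $i$ or $j$ with a single round of Fourier sampling; this finds the pivot with $O(1)$ queries. You instead aim to capture \emph{exactly one} clique vertex in $R$, use all of $R$ as the pivot set to run CGT on the complement, recover the lone vertex of $R\cap K$ by a second (weight-$1$) CGT call, and wrap the whole thing in a guess-and-check loop with an explicit $O(\sqrt k)$-query verification. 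Your route costs $O(\sqrt k)$ per pass rather than $O(1)$ for pivot-finding, but that is absorbed into the overall $O(\sqrt k)$ bound, and it buys you two things: you avoid Fourier sampling entirely, and your verification step makes the algorithm essentially one-sided (it never outputs a wrong clique), which also neatly sidesteps the question of what the CGT algorithm does on the out-of-promise threshold-$2$ oracle arising when $R\cap K=\emptyset$ --- a point you were right to flag, since the exactness guarantee of Theorem \ref{CGT algorithm} says nothing about inputs violating the promise. The only minor caveat is that the Grover-based check in your verification has bounded error rather than certainty, but this only affects constants. Both arguments are sound and achieve the stated $O(\sqrt k)$ bound.
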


\begin{proof}
The idea is as follows: First, we find a vertex $v$ in the clique, then use the quantum algorithm for CGT~\cite{belovs15} to learn all the other vertices using $O(\sqrt{k})$ queries, by querying with subsets of the vertices that include $v$. Such a query returns 1 if and only if the subset includes another vertex of the clique.

As for the first step,
the vertex $v$ can be found with high probability using $O(1)$ queries, using a similar idea to the quantum algorithm of~\cite{ambainis14b} for CGT.
We produce a subset $S$ of vertices by including each vertex with probability $1/k$. Then with probability $\binom{k}{2} k^{-2} (1-1/k)^{k-2}
\approx 1/2e$, this leads to exactly 2 vertices $i,j$ in the clique being included in the subset. 
This subset corresponds to a boolean function $f(x) = x_ix_j$ for unknown $i,j$.
To learn $i,j$, we use the Fourier sampling method.
Let $b_k$ be the bit-string of length $n$ whose $k$-th bit equals 1, and all other bits equal 0.
It is easy to verify that the Fourier coefficients of $f$ at $ b_i, b_j, b_i+b_j$ are all equal to $1/2$.
Thus with probability at least $3/4$, we can identify $x_i$ or $x_j$.
\end{proof}


\begin{prop}
\label{prop:Learning star graph}
There is a quantum algorithm which makes $O(\sqrt{m})$ OR queries and identifies an arbitrary star graph with $m$ edges.
\end{prop}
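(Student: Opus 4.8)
The plan is to reduce the problem to combinatorial group testing (Theorem~\ref{CGT algorithm}) as soon as the centre of the star is known, and to spend only $O(1)$ queries locating the centre. Write $c$ for the unknown centre and $L$ for the set of $m$ leaves; the only edges are $\{c,\ell\}$ with $\ell\in L$. The key observation is that once $c$ is known, then for any $X\subseteq V\setminus\{c\}$ the value $\fgor(X\cup\{c\})$ equals $1$ iff $X$ contains a leaf, i.e.\ it computes the OR of the bits of the leaf-indicator string $A\in\{0,1\}^{V\setminus\{c\}}$ on the subset $X$. Since $A$ has Hamming weight exactly $m$, the quantum CGT algorithm recovers $A$, and hence all of $G$, using $O(\sqrt m)$ OR queries, exactly.

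So it remains to find $c$ with $O(1)$ queries. First randomly split $V=V_1\sqcup V_2$, placing each vertex in $V_1$ independently with probability $1/2$. Call the split \emph{good} if $c$ and at least one leaf land on opposite sides; the bad event is ``all $m$ leaves land on the same side as $c$'', so the split is good with probability $1-2^{-m}\ge 1/2$. Say $c\in V_1$ and some leaf $\ell^*\in V_2$. Then for any $X\subseteq V_1$ we have $\fgor(X\cup V_2)=[c\in X]$, because $X\cup V_2$ always contains $\ell^*$, so it contains an edge iff it also contains $c$. This is a weight-one hidden-string oracle, so one query followed by Fourier sampling on $V_1$ (equivalently, the CGT algorithm with $k=1$, or Bernstein--Vazirani) returns $c$ with certainty. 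Not knowing the orientation, we try both: run the one-query routine on $X\mapsto\fgor(X\cup V_2)$ to get a candidate $c_1\in V_1$ and \emph{verify} it with the single query $\fgor(V\setminus\{c_1\})$, which is $0$ iff $c_1=c$ (deleting the centre kills all edges, deleting anything else leaves all $m\ge 1$ edges intact). If this fails, run the symmetric routine on $X\mapsto\fgor(X\cup V_1)$ over $X\subseteq V_2$ and verify again; if that also fails the split was bad, so we resample the split and repeat.

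Because the CGT step and the verification query are exact, an attempt succeeds whenever the split is good, so after $O(1)$ expected resamplings (or $O(\log(1/\delta))$ to get failure probability $\delta$) we identify $c$ and then learn $L$. The total is $O(\sqrt m)$ for the final CGT call plus $O(1)$ per attempt, i.e.\ $O(\sqrt m)$ OR queries. If $m$ is not given in advance, run the CGT step with guesses $k=1,2,4,\dots$, stopping when the reconstructed graph is a valid $k$-edge star; this still costs $\sum_j O(\sqrt{2^j})=O(\sqrt m)$.

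The only delicate points are the bookkeeping for the ``wrong orientation'' and for tiny $m$: when the orientation is wrong, $\fgor(X\cup V_2)$ need not be a dictator (it can be a constant, or an OR of several leaf variables), so the one-query routine may output a meaningless string; this is harmless precisely because every candidate is checked by the single query $\fgor(V\setminus\{\cdot\})$ before we commit to the expensive CGT step. The lone near-coincidence, that for $m=1$ a leaf can pass the verification test, is also harmless since a one-edge star is symmetric in its two endpoints. I expect this case analysis --- confirming that a bad or misoriented attempt is always detected within $O(1)$ queries --- to be the main thing to nail down; the query bound then follows immediately from Theorem~\ref{CGT algorithm}.
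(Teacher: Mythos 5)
Your proof is correct, and the second stage (fix the centre $c$, observe that $X\mapsto\fgor(X\cup\{c\})$ is exactly a CGT oracle for the leaf set, and invoke Theorem~\ref{CGT algorithm}) is identical to the paper's. Where you differ is in locating the centre. The paper Fourier-samples the \emph{original} function $f(x)=x_i\wedge(\vee_{j\in A}x_j)$ and computes that the Fourier coefficient on the singleton $\{i\}$ is $1-\Theta(2^{-m})$, so a single Fourier sample outputs the centre with high probability (with the $m=1$ case handled by the symmetry of a single edge, just as in your note). You instead randomly bipartition $V$, argue that on a good split the restricted function $X\mapsto\fgor(X\cup V_2)$ collapses to the dictator $[c\in X]$, recover $c$ exactly by Bernstein--Vazirani, and certify the candidate with the query $\fgor(V\setminus\{c_1\})$. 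Both cost $O(1)$ queries, so the asymptotics are the same; your route avoids any Fourier-coefficient calculation for the star function and, thanks to the verification query, never commits to the $O(\sqrt m)$ CGT stage on a wrong centre, at the price of a slightly longer case analysis (misorientation, non-singleton BV outputs, $m=1$), all of which you handle correctly. Your doubling trick for unknown $m$ is a small bonus not present in the paper's proof.
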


\begin{proof}
This is equivalent to learning the Boolean function 
$f(x) = x_i \wedge (\vee_{j\in A} x_j)$, for some unknown $i$, $A$,
where $A$ is a subset of $[n]$ of size $m$ 
and $i\notin A$.
To learn it, we use the Fourier sampling to identify the center $x_i$ first, then use the CGT algorithm to learn the edges.

The Fourier sampling method returns a state of the form
$
\sum_{y\in \{0,1\}^n} \widehat{f}(y) \ket{y}.
$
Consider the Fourier coefficient at $y_i=1, y_j = 0$ $(j\neq i)$.
It equals
\beas
\frac{1}{2^n}\sum_{x\in \{0,1\}^n} (-1)^{x_i \wedge (\vee_{j\in A} x_j) + x_i} 
&=&
\frac{1}{2^n} \left(\sum_{x\in \{0,1\}^n:x_i = 0} 1 - \sum_{x\in \{0,1\}^n: x_i = 1} (-1)^{\vee_{j\in A} x_j} \right)\\
&=& \frac{1}{2^n} (2^{n-1} - (2 - 2^{m-1})2^{n-m} ) \\
&=& 1 - \frac{1}{2^{m-1}}.
\eeas
This means that Fourier sampling can detect the center with $O(1)$ queries with high probability.
After we obtain the center, it suffices to
focus on the function obtained by setting $x_i = 1$.
Using the quantum algorithm for CGT, 
we can learn this function with $O(\sqrt{m})$ queries.
\end{proof}

The above two results are tight because of the optimality of CGT. More precisely, CGT corresponds to the special case of learning a clique when one vertex is given, or learning a star when the center is given.


\subsection{Lower bound}

Finally, we show a quantum lower bound for learning graphs with OR queries, which shows that the quantum algorithm given in Theorem \ref{thm:graphorqueries} for learning graphs with $m$ edges is optimal up to a logarithmic factor.

\begin{thm}
\label{thm:orqlower}
Let $G$ be an arbitrary graph of $n$ vertices. Then any quantum algorithm that learns $G$ with success probability $>1/2$ using OR queries must make $\Omega(n^2)$ queries.
\end{thm}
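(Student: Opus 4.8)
\textit{Proof proposal.}
The plan is to restrict attention to a subfamily of $n$-vertex graphs on which OR queries are as weak as possible — so weak that each query reveals the value of at most a single hidden bit — and then invoke the standard $\Omega(N)$ quantum lower bound for learning an $N$-bit string. Concretely, fix a partition $V = L \sqcup R$ with $|L| = \lfloor n/2\rfloor$, $|R| = \lceil n/2\rceil$, and set $N = |L|\,|R| = \Omega(n^2)$, indexing the coordinates of $\{0,1\}^N$ by pairs $(u,v) \in L\times R$. For $x \in \{0,1\}^N$, let $G_x$ be the graph whose edge set is the union of: all $\binom{|L|}{2}$ edges inside $L$; all $\binom{|R|}{2}$ edges inside $R$; and the edge $\{u,v\}$ for every pair with $x_{u,v}=1$. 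The point of making $L$ and $R$ induce complete graphs is that it renders \emph{every} query that is not a ``cross pair'' completely uninformative.

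Indeed, for any $S \subseteq V$ and $G = G_x$: if $|S|\le 1$ then $\fgor(S)=0$; if $|S\cap L|\ge 2$ or $|S\cap R|\ge 2$ then $S$ contains one of the fixed edges inside $L$ or inside $R$, so $\fgor(S)=1$; and in the only remaining case $S = \{u,v\}$ with $u\in L$, $v\in R$, we have exactly $\fgor(S) = x_{u,v}$. Thus on this family an OR query is either a constant independent of $x$, or a single-bit query to the hidden string $x$. Consequently, any quantum algorithm $\mathcal A$ that learns an arbitrary $n$-vertex graph using $q$ OR queries with success probability $>1/2$ yields, with only a constant-factor overhead in the number of queries, a quantum algorithm $\mathcal B$ that learns a hidden $x\in\{0,1\}^N$ from the standard quantum oracle for $x$ with success probability $>1/2$: $\mathcal B$ runs $\mathcal A$ on the (implicitly specified) input $G_x$, replacing each $\fgor$-oracle call by a small reversible gadget that writes the known constant answer when the query set is not a cross pair and otherwise queries the $x$-oracle at the coordinate $(u,v)$; when $\mathcal A$ outputs the learned graph, $\mathcal B$ reads off which cross pairs are edges, recovering $x$. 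Since determining an arbitrary $N$-bit string with bounded error requires $\Omega(N)$ quantum queries (e.g.\ because such an algorithm also computes $\bigoplus_{i=1}^{N} x_i$, and $\mathrm{PARITY}$ on $N$ bits has quantum query complexity $\Omega(N)$ by the polynomial method), we conclude $q = \Omega(N) = \Omega(n^2)$.

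The step that needs the most care is the simulation gadget in the reduction. A single $\fgor$ call of $\mathcal A$ acts on a superposition over query sets $S$, and the gadget must handle the ``bad'' branches (the $S$ that are not cross pairs) so that \emph{no} spurious call to the $x$-oracle is triggered on them, while keeping the overhead a fixed constant per step. This is routine: compute into a fresh qubit a flag indicating whether $S$ is a cross pair; controlled on the flag, either XOR the constant answer into the answer register, or compute the coordinate $(u,v)$, query the $x$-oracle, copy the returned bit into the answer register, and uncompute the query and the coordinate (a constant number of $x$-oracle calls, say two, per step). Everything else — the structure of the family, the fact that bad sets give constant answers, and the $\Omega(N)$ lower bound for string recovery — is immediate. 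I would also remark that one cannot shortcut this with a crude counting argument (``$2^{\Theta(n^2)}$ graphs, one bit per query''), since a quantum query to $\fgor$ is a unitary rather than a single classical bit, which is precisely why the reduction to a genuine quantum lower bound (string recovery / $\mathrm{PARITY}$) is needed.
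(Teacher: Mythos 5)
Your construction (two cliques with an arbitrary pattern of cross edges, so that every non-cross-pair query is constant and every cross-pair query reads one hidden bit, followed by a reduction to the $\Omega(N)$ quantum lower bound for recovering an $N$-bit string) is exactly the argument the paper gives, just with the simulation gadget and the parity-based justification of the string-recovery lower bound spelled out more explicitly. The proposal is correct and matches the paper's proof.
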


\begin{proof}
Consider the family of graphs on $2n$ vertices defined as follows. We first start with two disjoint cliques $A$, $B$ on $n$ vertices. We then put edges between $A$ and $B$ in an arbitrary pattern. This corresponds to an adjacency matrix of the form
\[ \begin{pmatrix} J-I & M\\ M^T & J-I \end{pmatrix} \]
where $J$ is the all-1's matrix, and $M$ is an arbitrary $n \times n$ matrix. Now observe that any query that contains more than one vertex in $A$, or more than one vertex in $B$, will always return 1. Any query that contains only one vertex in total will always return 0. So we can restrict to considering queries that include exactly one vertex of $A$ and exactly one vertex of $B$. Such a query just returns one of the entries of $M$. Learning $M$ with success probability $>1/2$ using this oracle requires $\Omega(n^2)$ quantum queries~\cite{beals01}.
\end{proof}

As a corollary, we get the lower bound that any quantum algorithm that learns an arbitrary graph with $m$ edges must make $\Omega(m)$ quantum queries. Also, by the known lower bound on the quantum query complexity of the parity function~\cite{beals01}, if $m$ is unknown, then any quantum algorithm that determines $m$ exactly must make $\Omega(m)$ queries when $m=\Omega(n^2)$.


\section{Learning an unknown graph state}
\label{sec:graphstates}

The graph state $\ket{G}$ on $n$ qubits corresponding to a graph $G=(V,E)$ with $n$ vertices can be defined explicitly as
\be \label{eq:graphstate} 
\ket{G} = \frac{1}{\sqrt{2^n}} \sum_{x \in \{0,1\}^n} (-1)^{\sum_{(i,j) \in E} x_i x_j} \ket{x},
\ee
The state
$\ket{G}$ can also be defined as the state produced by acting on the uniform superposition $\ket{+}^{\otimes n}$ with a controlled-$Z$ gate across each pair of qubits corresponding to an edge in $G$, or as the unique state stabilized by the set of Pauli operators $\{ X_v \prod_{w \in N(v)} Z_w : v \in V \}$, where $N(v)$ denotes the set of vertices neighbouring $v$~\cite{hein06}.

The representation (\ref{eq:graphstate}) makes it clear that graph states have a close connection to the parity query model, as $\ket{G}$ is the state produced by evaluating $\fgpar(S)$ on all subsets $S$ in uniform superposition. Therefore, lower bounds on the complexity of identifying graphs using parity queries imply lower bounds on the number of copies of $\ket{G}$ required to identify $G$, and upper bounds on the number of copies of $\ket{G}$ required to identify $G$ imply upper bounds on the complexity of identifying $G$ using parity queries.

First we show how to partially go in the other direction, by making parity queries of a certain form, given copies of $\ket{G}$. We use a procedure called Bell sampling, which was used for learning arbitrary stabilizer states in~\cite{montanaro17a}. Given two copies of a state $\ket{\psi}$ of $n$ qubits, Bell sampling corresponds to measuring each corresponding pair of qubits in the Bell basis. Outcomes of Bell sampling can be identified with strings $s \in \{I,X,Y,Z\}^n$ of Pauli matrices, and are observed with the following probabilities:

\begin{lem}[{Lemma 2 of \cite{montanaro17a}}]
    \label{lem:bsamp}
    Let $\ket{\psi}$ be a state of $n$ qubits. Bell sampling applied to $\ket{\psi}^{\otimes 2}$ returns outcome $s$ with probability
    \[ \frac{|\braket{\psi|\sigma_s|\psi^*}|^2}{2^n}, \]
    where $\ket{\psi^*}$ is the complex conjugate of $\ket{\psi}$ with respect to the computational basis, and $\sigma_s = s_1 \otimes s_2 \otimes \dots \otimes s_n$.
\end{lem}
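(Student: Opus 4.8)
## Proof proposal for Lemma~\ref{lem:bsamp}

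The plan is to compute the outcome probabilities of Bell sampling directly from the definition, using the fact that the two-qubit Bell states are (up to normalization and phase) the images of a fixed maximally entangled state under $\sigma \otimes I$ for $\sigma \in \{I,X,Y,Z\}$. First I would recall that the four Bell states on a pair of qubits can be written as $\ket{\beta_s} = (\sigma_s \otimes I)\ket{\Phi}$ where $\ket{\Phi} = \tfrac{1}{\sqrt2}(\ket{00}+\ket{11})$ is the standard EPR pair and $s \in \{I,X,Y,Z\}$; one should be slightly careful with the phase on the $Y$ outcome, but since the final expression only involves $|\cdot|^2$ this phase is harmless. Taking a tensor product over all $n$ qubit pairs, the projective measurement in the Bell basis on $\ket{\psi}^{\otimes 2}$ has outcome $s = (s_1,\dots,s_n)$ with probability $|\bra{\beta_s}^{\otimes n} \cdot (\ket{\psi}\otimes\ket{\psi})|^2$, where the two copies are interleaved so that qubit $i$ of the first copy is paired with qubit $i$ of the second.

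The key algebraic step is the standard ``transpose trick'' for the EPR pair: for any single-qubit operator $O$, $(O \otimes I)\ket{\Phi} = (I \otimes O^T)\ket{\Phi}$, and more usefully $\bra{\Phi}(O \otimes I) = \bra{\Phi}(I \otimes O^T)$, together with the identity $(\bra{\Phi}\otimes\ket{v})$ acting on $(\ket{u}\otimes I)$-type expressions collapsing an inner product into $\braket{u^*|v}$ up to normalization. Concretely, after tensoring over all $n$ pairs and reordering, $\bra{\Phi}^{\otimes n}$ applied to $\ket{\psi}\otimes\ket{\psi}$ (with the interleaving above) equals $2^{-n/2}\braket{\psi^*|\psi}$; inserting $\sigma_s^\dagger = \sigma_s$ on the first register turns this into $2^{-n/2}\braket{\psi|\sigma_s|\psi^*}$ after moving $\sigma_s$ across via the transpose trick and using that each Pauli is either symmetric ($I,X,Z$) or antisymmetric ($Y$) — again, any resulting sign disappears under the modulus. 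Squaring and accounting for the $2^{-n}$ normalization from the $n$ Bell-pair overlaps gives the claimed probability $|\braket{\psi|\sigma_s|\psi^*}|^2 / 2^n$.

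I expect the main obstacle to be purely bookkeeping rather than conceptual: keeping the tensor-factor ordering straight when interleaving the two $n$-qubit copies so that the per-pair EPR overlaps combine correctly into the global $\braket{\psi^*|\psi}$-type contraction, and tracking the factors of $2^{-1/2}$ per pair so that the total normalization is exactly $2^{-n}$ and not $2^{-n/2}$ or $1$. A clean way to organize this is to first prove the $n=1$ case as an explicit $4\times 4$ calculation, then argue the general case by noting that Bell sampling factorizes over qubit pairs while the operators $\sigma_s$ factorize over qubits, so the $n$-qubit statement is the $n$-fold tensor product of the $n=1$ statement. Since this is exactly Lemma~2 of~\cite{montanaro17a}, I would in practice simply cite it, but the above sketch is the self-contained derivation.
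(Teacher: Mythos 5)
Your derivation is correct. The paper itself does not prove this lemma at all --- it is imported verbatim as Lemma~2 of the cited reference \cite{montanaro17a} --- so there is no in-paper argument to compare against; your sketch is essentially a reconstruction of the standard proof one finds in that reference. The key identities all check out: writing the Bell basis as $\ket{\beta_s} = (\sigma_s \otimes I)\ket{\Phi}$ with $\ket{\Phi}^{\otimes n}$ interleaved to $\tfrac{1}{\sqrt{2^n}}\sum_x \ket{x}\ket{x}$, the outcome amplitude is $\tfrac{1}{\sqrt{2^n}}\sum_x \braket{x|\sigma_s|\psi}\braket{x|\psi} = \tfrac{1}{\sqrt{2^n}}\braket{\psi^*|\sigma_s|\psi}$, whose modulus squared equals the claimed $|\braket{\psi|\sigma_s|\psi^*}|^2/2^n$ since $\sigma_s$ is Hermitian. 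Your cautions about the phase on the $Y$ Bell state and about the transpose trick flipping a sign for antisymmetric Paulis are exactly the right things to flag, and both are indeed harmless under the modulus. The one piece of bookkeeping worth making fully explicit if you write this out is the identity $\braket{x|\psi} = \braket{\psi^*|x}$ (equivalently $\overline{\braket{\psi|\sigma_s|\psi^*}} = \sum_x \braket{x|\sigma_s|\psi}\braket{x|\psi}$), which is where $\ket{\psi^*}$ enters; your proposal gestures at this via the ``collapsing into $\braket{u^*|v}$'' step but does not spell it out. Citing the lemma, as the paper does and as you suggest, is of course also acceptable.
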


If $\ket{G}$ is a graph state, then $\ket{G} = \ket{G^*}$, and $|\braket{G|\sigma_s|G}|^2=1$ if and only if $\sigma_s$ is a stabilizer of $\ket{G}$; otherwise, $|\braket{G|\sigma_s|G}|^2=0$. Therefore, Bell sampling returns a uniformly random stabilizer of $\ket{G}$. Such a stabilizer can be produced by taking the product of a random subset $S$ of the rows of the stabilizer matrix for $G$ (where each row is included with independent probability $1/2$). We obtain the following overall operator:
\[  \prod_{v \in S} X_v \prod_{u \in N(v)} Z_u = \pm \prod_{u \in [n]} X_u^{[u \in S]} Z_u^{|N(u) \cap S|} \]
where we collect $X$ and $Z$ terms together for each vertex $u \in [n]$. Hence, when we receive a sample of a uniformly random stabilizer of $\ket{G}$, we obtain a random subset $S \subseteq [n]$, and for each $u \in [n]$, we learn the number of edges between $u$ and $S$, mod 2. We learn the identity of $S$ from which qubits have an $X$ term associated with them.

This allows us to try to find efficient algorithms based only on this (now classical) subroutine of learning subsets and parities. Indeed, learning a graph state using Bell sampling is equivalent to learning a graph using parity queries, as studied in Section \ref{sec:parityqueries} below -- except with the restriction that these queries are only to uniformly random subsets of the vertices. We first give a general algorithm for learning a graph known to be picked from any finite set.

\begin{thm}
\label{thm:learning}
    Let $S$ be a family of graphs. Then, for any $G \in S$, $G$ can be identified by applying Bell sampling to $O(\log |S|)$ copies of $\ket{G}$. The algorithm succeeds with probability at least 0.99.
\end{thm}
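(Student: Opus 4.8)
The plan is to use Bell sampling as a source of random linear-algebraic constraints that distinguish $G$ from every other graph in $S$, and then apply a union bound. Recall from the discussion preceding the theorem that a single round of Bell sampling on $\ket{G}^{\otimes 2}$ yields a uniformly random subset $S \subseteq [n]$ together with, for each $u \in [n]$, the parity $|N(u) \cap S| \bmod 2$; equivalently, writing $A$ for the adjacency matrix of $G$ over $\F_2$ and $\mathbf{s}$ for the indicator vector of $S$, one round returns the pair $(\mathbf{s}, A\mathbf{s})$ for a uniformly random $\mathbf{s} \in \{0,1\}^n$. So after $t$ rounds we have a matrix $B \in \F_2^{n \times t}$ of uniformly random columns, together with $AB$.

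The key step is to bound the probability that Bell sampling fails to separate the true graph $G$ from some other candidate $G' \in S$. If $A, A'$ are the adjacency matrices of $G, G'$, then a single round $(\mathbf{s}, A\mathbf{s})$ is consistent with both iff $(A - A')\mathbf{s} = 0$, i.e.\ iff $\mathbf{s}$ lies in the kernel of the nonzero matrix $A - A'$. Since $A \neq A'$, this kernel is a proper subspace of $\F_2^n$, so a uniformly random $\mathbf{s}$ lies in it with probability at most $1/2$. The $t$ rounds are independent, so the probability that $G'$ remains consistent with all $t$ samples is at most $2^{-t}$. Taking a union bound over the at most $|S| - 1$ graphs $G' \neq G$, the probability that any wrong graph survives is at most $(|S|-1)2^{-t} \le |S| \, 2^{-t}$. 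Choosing $t = \lceil \log_2(100|S|) \rceil = O(\log|S|)$ makes this at most $1/100$, so with probability at least $0.99$ the only graph in $S$ consistent with all the samples is $G$ itself, and the algorithm outputs it.

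The only mild subtlety — and the place to be slightly careful — is the claim that one Bell sample gives exactly the pair $(\mathbf{s}, A\mathbf{s})$: the stabilizer returned carries an overall sign $\pm 1$, but the sign is irrelevant since we only read off the $X$-support (which gives $\mathbf{s}$) and the $Z$-exponents mod $2$ (which give $A\mathbf{s}$), exactly as spelled out in the paragraph preceding the theorem. There is no genuine obstacle here; the argument is the standard ``random hyperplane separates two distinct vectors with probability $\ge 1/2$'' observation combined with a union bound, and the $0.99$ success probability is just a matter of choosing the constant in front of $\log|S|$.
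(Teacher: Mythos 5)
Your proposal is correct and follows essentially the same route as the paper: Bell sampling gives $(\mathbf{s},A\mathbf{s})$ for uniform $\mathbf{s}$, a nonzero difference matrix $A+A'$ annihilates a uniform $\mathbf{s}$ with probability at most $1/2$, and a union bound over candidates gives $O(\log|S|)$ samples. The only cosmetic difference is that the paper union-bounds over all $|S|^2$ pairs $(A,A')$ so that a single choice of sample matrix works for every possible true graph, whereas you fix the true $G$ and bound over the $|S|-1$ alternatives; both yield the stated complexity.
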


\begin{proof}
    Let $A$ be the adjacency matrix of $G$. Each Bell sample returns the inner product of a random vector $\mathbf{s} \in \F_2^n$ with each column (or row) of $A$. If we take $k$ samples, we can write these $k$ row vectors as an $n \times k$ matrix $B$. Then the result of the Bell sampling procedure is the matrix $AB$.
    
    To be able to uniquely identify $G$, we want $AB \neq A'B$ for all $A$, $A'$ corresponding to graphs in $S$, or in other words $(A+A')B \neq 0^{n\times k}$. As each entry of $B$ is uniformly random, for any  $n\times n$ matrix $C$ with rank $r$, $\Pr_B[CB = 0^{n\times k}] = 2^{-kr}$. (This holds because for each linearly independent row $\mathbf{c}$ of $C$, $\Pr_B[\mathbf{c}B = 0^k] = 2^{-k}$, and these events are independent.) In particular, for any nonzero matrix $C$, $\Pr_B[
   CB = 0^{n\times k}] \le  2^{-k}$. The number of matrices $C$ of the form $C = A + A'$ is at most $|S|^2$. Taking a union bound over all such matrices, we have
    \[ \Pr_B[\exists C = A + A', CB = 0^{n\times k}] \le \frac{|S|^2}{2^k}. \]
    So it is sufficient to take $k = O(\log |S|)$ to achieve failure probability $0.01$, as claimed.
\end{proof}

As a corollary of Theorem \ref{thm:learning}, if $G$ is a graph with at most $m$ edges, it can be identified with $O(m \log(n^2/m))$ copies of $\ket{G}$.

It is natural to wonder whether the dependence on $|S|$ in Theorem \ref{thm:learning} could be improved, because if $S$ is the set of all graphs, the complexity of Theorem \ref{thm:learning} does not match that of the best algorithms for learning an arbitrary graph state, which use $O(n)$ copies of $\ket{G}$~\cite{aaronson08,zhao16,montanaro17a}. An information-theoretic lower bound comes from the fact that $\ket{G}$ is a state of $n$ qubits, so by Holevo's theorem, $\Omega((\log |S|)/n)$ copies are required to identify a state from $S$. In addition, this bound cannot always be reached; if $S$ is the set of all graphs on $r$ vertices, for some $r < n$, the number of copies required to identify a graph from this set is $\Theta(r)$ by the same information-theoretic argument, which can be much larger than $O(r^2/n)$ for some choices of $r$. This suggests that the best dependence on $|S|$ that could be achieved is $O(\sqrt{\log|S|})$.

However, better complexities can be achieved for graphs with more structure.
If the graph is promised to be a star, then the Fourier sampling method can be applied to learn it with $O(1)$ copies of $\ket{G}$. More precisely,
suppose the edges of the star graph are $(i, j), j \in A$. Here $i$ is the center and we assume $|A|\geq 1$. Then
\[
\ket{G} = \frac{1}{\sqrt{2^n}} \sum_{x\in\{0,1\}^n} (-1)^{x_i \sum_{j\in A} x_j} \ket{x}.
\]
By Fourier sampling, if we apply Hadamard gates to $\ket{G}$, we obtain the state
\be
\frac{1}{\sqrt{2}} \ket{0,\ldots,0} \ket{+} \ket{0,\ldots,0} + \frac{1}{\sqrt{2}} \ket{[1\in A],\ldots,[i-1\in A]} \ket{-} \ket{[i+1\in A],\ldots,[n\in A]}.
\ee
The $\ket{\pm }$ is in the $i$-th qubit. Performing measurements in the computational basis, if we obtain $\ket{0,\ldots,0} \ket{1} \ket{0,\ldots,0}$, then we know the center; if we obtain a state
with more than two 1's, then we know all vertices in $A$. The probability is 1/4 for each case, so we can learn a unknown star using $O(1)$ copies of $\ket{G}$.

We can also apply Bell sampling to learn cliques with $O(1)$ copies. Each Bell sample gives us the inner product of each row of the adjacency matrix with a random vector, and each nonzero row has probability $1/2$ for this inner product to be nonzero. As $G$ is a clique, all its nonzero rows are the same. Thus, after $O(1)$ samples, with high probability we learn all the nonzero rows at once.

In summary, we have
\begin{thm}
There is a quantum algorithm that identifies $G$ by using $O(1)$ copies of $\ket{G}$ if $G$ is a star or a clique.
\end{thm}

Next we consider the case of bounded-degree graphs.

\begin{thm}
\label{thm:lowdegreegs}
For an arbitrary graph $G$, there is a quantum algorithm which uses $O(d \log(m/d))$ copies of $\ket{G}$, and for each vertex $v$ that has degree at most $d$, outputs all the neighbours of $v$ and that $v$ has degree at most $d$. For each vertex $w$ that has degree larger than $d$, the algorithm outputs ``degree larger than $d$''. The algorithm succeeds with probability at least 0.99.
\end{thm}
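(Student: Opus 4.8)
The plan is to reuse the Bell sampling subroutine in essentially the same way as Theorem \ref{thm:learning}, but to analyze it row by row rather than globally, so that the number of copies depends only on the per-vertex degree bound $d$ and not on $n$ or on the total number of edges directly. Recall from the discussion above that $k$ rounds of Bell sampling produce a uniformly random $n \times k$ matrix $B$ over $\F_2$ together with the product $AB$, where $A$ is the (unknown) adjacency matrix of $G$. Fix a vertex $v$ and let $\a_v \in \F_2^n$ be the $v$'th row of $A$; the $v$'th row of $AB$ is exactly $\a_v B$. So from the data we have, for each $v$, the vector $\a_v B \in \F_2^k$.

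The key observation is that if $v$ has degree at most $d$, then $\a_v$ lies in the set of vectors of Hamming weight $\le d$, and distinct such vectors $\a_v \neq \a_v'$ differ in a nonzero vector of weight $\le 2d$; a uniformly random $B$ kills a fixed nonzero vector with probability $2^{-k}$. The number of candidate rows of weight $\le d$ is $\sum_{i=0}^d \binom{n}{i} = O(n^d)$ — but this gives a $\log n$ dependence, which is too weak. To get $\log(m/d)$, I would instead restrict attention, for each $v$, to the set of vertices that could possibly be a neighbour of $v$: we may first learn the set of non-isolated vertices (equivalently, the support of $A$), of which there are at most $2m/1 = O(m)$, actually at most $\min(2m, n)$; more carefully, I would argue that it suffices to take a union bound over pairs $(\a_v, \a_v')$ of weight-$\le d$ vectors supported on the $O(m/d)$-or-so relevant coordinates — since a vertex of degree $\le d$ has neighbours among the non-isolated vertices, and there are at most $2m/\overline{d}$ of those where $\overline{d}$ is the average degree, one gets $\binom{2m/\ldots}{\le d}$-many candidates, whose logarithm is $O(d \log(m/d))$. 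Taking $k = O(d\log(m/d))$ then makes the failure probability over all $v$ and all pairs of candidate rows at most $0.01$ by a union bound, exactly as in Theorem \ref{thm:learning}.

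Given $B$ and $\a_v B$, the algorithm decides the output for $v$ as follows: search for a weight-$\le d$ vector $\w$ (among the candidate supports) with $\w B = \a_v B$. If the round count $k$ is large enough that $B$ separates all pairs of candidate weight-$\le d$ vectors (which is the event we ensured with high probability), then such a $\w$ is unique if it exists, and in that case $\w = \a_v$, so we output the neighbour set $\supp(\w)$ and report ``degree $\le d$''. If no weight-$\le d$ candidate matches $\a_v B$, then $\a_v$ has weight $> d$, so we report ``degree larger than $d$'' for $w = v$. Correctness of this decision rule is immediate from the separation property; note we never need to actually run the weight-$\le d$ search efficiently for this statement, since the theorem claims only a sample-complexity (copy-complexity) bound.

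The main obstacle is getting the candidate-counting right so that the exponent is $d\log(m/d)$ rather than $d \log n$: one must argue that it genuinely suffices to union-bound over weight-$\le d$ vectors supported on a set of size $O(m/d)$ (or similar), rather than over all weight-$\le d$ vectors in $\F_2^n$. This requires first (cheaply) identifying which vertices are non-isolated — or, more simply, observing that any two distinct weight-$\le d$ rows $\a_v, \a_v'$ have $\a_v + \a_v'$ supported within the union of the neighbourhoods, and then bounding the number of such difference vectors by $\binom{2m}{\le 2d} \le (em/d)^{O(d)}$, using $\sum_{v}\deg(v) = 2m$ to control how many coordinates can ever appear. I would work out the precise combinatorial estimate $\log \binom{O(m/d)}{\le d} = O(d\log(m/d))$ and confirm the constant factors absorb into the $O(\cdot)$; everything else is a verbatim repetition of the union-bound argument from Theorem \ref{thm:learning}.
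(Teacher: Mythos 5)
Your proposal is correct and follows essentially the same two-stage route as the paper's proof: first identify the (at most $2m$) non-isolated vertices with a few Bell samples, then take $k = O(d\log(m/d))$ further samples and union-bound over pairs of weight-$\le d$ candidate rows supported on that set. The only real difference is mechanical --- the paper physically measures out the isolated qubits and invokes the invariance of Bell sampling under the residual local $Z$ operators, whereas you restrict the candidate set combinatorially --- and your worry about needing $O(m/d)$ rather than $O(m)$ relevant coordinates is unnecessary, since $\log \binom{2m}{\le d} = O(d\log(m/d))$ already suffices.
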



\begin{proof}
We assume that $d \le n/4$ throughout, as otherwise an algorithm for learning an arbitrary graph using $O(n)$ copies can be used~\cite{aaronson08,zhao16,montanaro17a}. We produce $k$ Bell samples, corresponding to vectors $A \w_1,\dots,A \w_k$ for uniformly random vectors $\w_1,\dots,\w_k \in \{0,1\}^n$.
For any pair $\x \neq \y \in \{0,1\}^n$, the probability that $\x\cdot \w_i = \y \cdot \w_i$ for all $i$ is equal to the probability that $(\x+\y)\cdot \w_i = 0$ for all $i$, which equals $2^{-k}$. By a union bound, for any $\x \in \{0,1\}^n$, the probability that there exists $\y \in \{0,1\}^n$ such that $\y \neq \x$, $|\y| \le d$ and $\x\cdot \w_i = \y \cdot \w_i$ for all $i$ is bounded by $\sum_{l=0}^d \binom{n}{l} 2^{-k} = O(2^{d\log(n/d)-k})$.

We then apply this bound to all $n$ rows of $A$ via a union bound, to obtain that the probability that, for any row $\x$ of $A$, there exists $\y \in \{0,1\}^n$ with $|\y| \le d$, $\x\cdot \w_i = \y \cdot \w_i$ for all $i$ and $\y \neq \x$ is $O(n 2^{d \log(n/d)-k})$. Taking $k = O(d \log(n/d))$ is sufficient to bound this probability by an arbitrarily small constant.
Assuming that this failure event does not occur, the algorithm determines all rows of $A$ with Hamming weight bounded by $d$, and identifies all rows that are inconsistent with having Hamming weight bounded by $d$.

We finally show how to replace $n$ with $m$ in the algorithm's complexity. This is achieved by first identifying the subset $W$ of non-isolated vertices, and then running the algorithm above on the vertices in this subset. We can restrict the graph to this subgraph $H$ by measuring the qubits corresponding to the other vertices in the computational basis. The resulting state is of the form $\ket{H'} = \prod_{i \in T} Z_i \ket{H}$, for some subset $T \subseteq W$. By Lemma \ref{lem:bsamp}, Bell sampling behaves in the same way on $\ket{H'}$ as on $\ket{H}$.
To find the subset $W$, Bell sampling is applied $l$ times for some $l$, to produce an $n \times l$ matrix $C = AB$ for a uniformly random matrix $B$. The set of vertices corresponding to rows of $C$ which have at least one nonzero entry is kept, to produce a set $W'$. Any zero row of $A$ will always produce a corresponding zero row of $C$, so will not be included in $W'$. On the other hand, the probability that any nonzero row of $A$ produces the corresponding zero row of $C$ is $2^{-l}$. As there are at most $2m$ nonzero rows, corresponding to vertices in $W$, the probability that any vertex in $W$ is not included in $W'$ is $O(m 2^{-l})$ by a union bound. So it is sufficient to take $l=O(\log m)$ to learn which rows are nonzero with probability $0.99$.
\end{proof}

We can also learn the family of graphs that are subgraphs of a fixed graph $G'$ of bounded degree $d$. This is relevant to the setting where we have attempted to produce $\ket{G'}$ using a quantum circuit which may have failed to produce certain edges, and we would like to determine which graph we have actually produced. In this case, we can get an algorithm that still uses $O(d \log n)$ copies like Theorem \ref{thm:lowdegreegs}, but is also computationally efficient, in that its runtime is $O(d^3 n \log^3 n)$.

\begin{thm}
\label{thm:boundeddegsubgraph}
Let $G'$ be a graph of bounded degree $d$, $G$ be a subgraph of $G'$. Given access to copies of $\ket{G}$, there is a quantum algorithm that identifies $G$ using $O(d \log n)$ copies with runtime $O(d^3 n \log^3 n)$. The algorithm succeeds with probability at least 0.99.
\end{thm}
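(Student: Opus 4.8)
The plan is to reuse the Bell-sampling reduction from the proof of Theorem~\ref{thm:lowdegreegs}, but to replace its brute-force decoding step by a linear-algebraic one that exploits the promise $G \subseteq G'$. I would apply Bell sampling to $2k$ copies of $\ket{G}$, grouped into $k$ pairs, with $k = O(d\log n)$; as explained in Section~\ref{sec:graphstates}, this yields a uniformly random matrix $B \in \F_2^{n\times k}$ together with the product $AB \pmod 2$, where $A$ is the (unknown) adjacency matrix of $G$. The crucial point is that, since $G$ is a subgraph of the known graph $G'$, the support of the $i$-th row $A_i$ of $A$ is contained in $N_{G'}(i)$, a set of size at most $d$ which we can compute for free. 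Writing $B_{N_{G'}(i)}$ for the $|N_{G'}(i)| \times k$ submatrix of $B$ with rows indexed by $N_{G'}(i)$, and $(AB)_i$ for the $i$-th row of $AB$, the restriction $A_i|_{N_{G'}(i)}$ is a solution $\mathbf{z} \in \F_2^{N_{G'}(i)}$ of the linear system $\mathbf{z}^\top B_{N_{G'}(i)} = (AB)_i$; so for each $i$ I would solve this small system by Gaussian elimination over $\F_2$ and output the edge $\{i,j\}$ whenever $z_j = 1$ in the recovered vector.

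To argue correctness I need each such system to have a \emph{unique} solution, i.e.\ each $B_{N_{G'}(i)}$ to have full row rank. Since the entries of $B$ are i.i.d.\ uniform over $\F_2$, we have $\Pr[\,\mathbf{z}^\top B_{N_{G'}(i)} = 0\,] = 2^{-k}$ for every fixed nonzero $\mathbf{z}$, so $B_{N_{G'}(i)}$ fails to have full row rank with probability at most $(2^{|N_{G'}(i)|}-1)2^{-k} \le 2^{d-k}$; a union bound over the $n$ vertices makes all $n$ of these submatrices full row rank with probability $1 - O(n\,2^{d-k})$, which exceeds $0.99$ as soon as $k = O(d + \log n)$, and a fortiori for $k = O(d\log n)$. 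On this event every system is uniquely solvable, its unique solution is the true $A_i|_{N_{G'}(i)}$, and the recovered rows, being equal to the rows of $A$, are automatically mutually consistent, so the algorithm reconstructs exactly $G$.

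For the complexity: the Bell-sampling stage uses $2k = O(d\log n)$ copies of $\ket{G}$ and $O(nk) = O(nd\log n)$ elementary operations (a Bell measurement on $2n$ qubits per pair, plus reading off the random subset and the parities). The decoding stage performs, for each of the $n$ vertices, Gaussian elimination over $\F_2$ on a system in at most $d$ unknowns and $k = O(d\log n)$ equations, at cost $O(d^2 k) = O(d^3\log n)$ bit operations per vertex; summing over vertices and including the bookkeeping of the $O(\log n)$-bit indices and bit-vectors gives total runtime $\widetilde{O}(d^3 n)$, which one bounds by the $O(d^3 n \log^3 n)$ claimed, dominating the sampling cost.

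I expect the one genuinely delicate point to be the uniqueness certification in the second paragraph: Gaussian elimination always returns \emph{some} solution of the system, so the algorithm is correct only when we can guarantee that the true row is the unique solution — which is exactly what the full-row-rank union bound provides, and the reason $k$ must be $\Omega(d + \log n)$. The remaining ingredients (the reduction of Bell samples to the pair $(B, AB)$, the containment $\supp(A_i) \subseteq N_{G'}(i)$, and the running time of Gaussian elimination) are routine.
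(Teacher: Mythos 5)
Your proposal is correct and follows essentially the same route as the paper: Bell sampling yields the pair $(B,AB)$, the promise $G\subseteq G'$ confines each row of $A$ to at most $d$ known coordinates, and each row is recovered by solving a small random linear system over $\F_2$, with a union bound over the $n$ vertices ensuring full rank (the paper cites a reference for the rank bound where you derive it directly, and your $O(d^2k)$ per-vertex Gaussian elimination is a mild sharpening of the paper's $O(k^3)$). No gaps.
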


\begin{proof}
We take $k$ Bell samples, for some $k$ to be determined. 
For each vertex $v$, the corresponding row $\r_v$ of $A$ is a linear combination over $\F_2$ of at most $d$ fixed vectors $\e_1,\dots,\e_d$ of Hamming weight 1, where each vector corresponds to a neighbour of $v$ in $G'$. 
So we can write $\r_v = \sum_{i=1}^d x_i \e_i$ for some $x_i \in \{0,1\}$, and determining $\x \in \{0,1\}^d$ is sufficient to determine $\r_v$. As the results of the Bell samples correspond to inner products between $\r_v$ and random vectors over $\F_2^n$, we obtain a system of $k$ random linear equations in $d$ unknowns. These equations can be solved in time $O(k^3)$ to determine $\x$ if the corresponding random matrix is full rank, and the probability that a random $k \times d$ matrix over $\F_2$, $k \ge d$, is not full rank is $O(2^{-(k-d)})$ \cite{ferreira2012rank}.
So, by a union bound, it is sufficient to take $k = O(d \log n)$ for all of the rows of $A$ to be determined by solving the corresponding systems of linear equations.
\end{proof}

Using a similar technique to the last part of Theorem \ref{thm:lowdegreegs}, the linear dependence on $n$ in Theorem \ref{thm:boundeddegsubgraph} can be replaced with a linear dependence on the number of non-isolated vertices, and the polylogarithmic dependence on $n$ can be replaced with an equivalent dependence on $m$.



\section{Learning an unknown graph with parity queries}
\label{sec:parityqueries}

In this section we  investigate learning an unknown graph $G$ using the parity oracle $\fgpar(S)$. Identifying $S$ with a bit-string $x \in \{0,1\}^n$ via $x_i = 1$ if $i \in S$, and $x_i = 0$ otherwise, we see that
\[ \fgpar(x) = \sum_{(i,j) \in E} x_i x_j \]
where the sum is taken mod 2. So, if $G$ is arbitrary, $\fgpar$ is an arbitrary quadratic polynomial over $\F_2$ with no linear part.
It was shown in~\cite{montanaro12b} that any polynomial of this form can be learned using $O(n)$ quantum queries, and this is optimal. This immediately gives a quantum algorithm for learning an arbitrary graph using $O(n)$ parity queries, which is quadratically better than the best possible classical algorithm. (By an information-theoretic argument, classically $\Omega(n^2)$ parity queries are required.)

Evaluating $\fgpar(x)$ on a uniform superposition over computational basis states $\ket{x}$ gives precisely the graph state $\ket{G}$, so the results of Section \ref{sec:graphstates} can all immediately be applied to learning graphs in the parity query model. However, the ability to evaluate $\fgpar(x)$ on other input states allows for more general algorithms to be developed. In particular, we can obtain the following subroutine.

\begin{lem}
\label{lem:vector}
Let $A$ be the adjacency matrix of $G$. For any $v \in \{0,1\}^n$, there is a quantum algorithm which returns $Av$ and makes two queries to $\fgpar$.
\end{lem}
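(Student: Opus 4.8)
The plan is to compute $Av \bmod 2$ by isolating, for each vertex $u$, the parity of the number of edges from $u$ into the set $S_v = \{i : v_i = 1\}$. The key identity is that for the bit-string $w = e_u + v$ (where $e_u$ is the Hamming-weight-1 string supported on $u$), we have
\[
\fgpar(e_u + v) = \sum_{(i,j)\in E} (e_u+v)_i (e_u+v)_j.
\]
Expanding the quadratic form over $\F_2$, this equals $\fgpar(v) + \fgpar(e_u) + (\text{cross terms})$, and since $\fgpar(e_u) = 0$ (a single vertex spans no edges), the cross terms contribute exactly $\sum_{(i,j)\in E}\big( (e_u)_i v_j + v_i (e_u)_j \big) = (Av)_u \bmod 2$. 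So $\fgpar(e_u + v) - \fgpar(v) = (Av)_u$. First I would make this expansion precise, being careful that the quadratic form $x \mapsto \fgpar(x)$ is genuinely quadratic (not bilinear) over $\F_2$, so that $\fgpar(x+y) = \fgpar(x) + \fgpar(y) + B(x,y)$ where $B$ is the associated bilinear form; then $B(e_u, v) = (Av)_u$ is exactly what we want to extract.

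The quantum implementation: prepare the state $\frac{1}{\sqrt{2^n}}\sum_{w} \ket{w}\ket{-}$ — actually, I would instead run two invocations of the phase oracle $(\ref{oracle: boolean function})$ in superposition over the label $u$. Concretely, start from $\frac1{\sqrt n}\sum_{u\in[n]} \ket{u}$ (in a register indexing vertices) tensored with the fixed computational basis state $\ket{v}$ in an $n$-qubit register. Controlled on $\ket{u}$, flip the $u$-th bit of the second register to form $\ket{u}\ket{e_u + v}$. Apply the phase oracle for $\fgpar$, picking up $(-1)^{\fgpar(e_u+v)}$. Uncompute the bit flip to return to $\ket{u}\ket{v}$. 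Then apply a second phase oracle call to multiply by $(-1)^{\fgpar(v)}$ (this is independent of $u$, a global factor we could also absorb, but applying it explicitly is cleanest). The net phase on $\ket{u}$ is $(-1)^{\fgpar(e_u+v) + \fgpar(v)} = (-1)^{(Av)_u}$. This is a diagonal unitary encoding all $n$ bits of $Av$ in the phases; applying it instead to the uniform superposition over the whole vertex register of $n$ qubits — i.e.\ act on $\ket{+}^{\otimes n}$, with the $u$-th qubit carrying phase $(-1)^{(Av)_u}$ — and then a layer of Hadamards, yields $\ket{Av}$ in the computational basis, which we read off. Total oracle cost: two queries to $\fgpar$.

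The main obstacle I anticipate is purely bookkeeping rather than conceptual: making sure the "controlled bit-flip, query, uncompute, second query" construction really does leave the ancillary registers clean so that the final state is a pure computational basis state $\ket{Av}$ with no entanglement or garbage, and that the diagonal-unitary-then-Hadamard step is set up on the right $n$-qubit register. One subtlety worth checking: the phase $(-1)^{\fgpar(v)}$ is a constant (no dependence on which qubit), so it contributes only a global phase to $\ket{+}^{\otimes n}$ and is in fact unnecessary; I would note this, since it means the second query can sometimes be saved, though stating the lemma with two queries is harmless and matches how it will be used. Beyond that, the verification that $B(e_u,v)$ equals the $u$-th entry of $Av$ over $\F_2$ is immediate from the definition of the adjacency matrix, and the rank/error-probability considerations that appear elsewhere do not arise here since the procedure is exact and deterministic.
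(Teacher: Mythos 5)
Your algebraic identity is correct: writing $f=\fgpar$ as the quadratic form $x^TBx$ (with $B$ the upper-triangular half of $A$), polarization gives $f(x+v)+f(x)+f(v)=x^T(B+B^T)v=x\cdot(Av)$, and specializing to $x=e_u$ (where $f(e_u)=0$) yields $f(e_u+v)+f(v)=(Av)_u$. The gap is in the quantum implementation, and it is exactly the ``bookkeeping'' you flagged. As described, your circuit acts on a vertex-\emph{label} register and produces the state $\tfrac{1}{\sqrt n}\sum_{u\in[n]}(-1)^{(Av)_u}\ket{u}$. This state lives in an $n$-dimensional space, so no ``layer of Hadamards'' (or any measurement) can deterministically recover the $n$-bit string $Av$ from one copy: the $2^n$ candidate states are far from mutually orthogonal. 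If instead you try to realize the tensor-product diagonal unitary $\ket{x}\mapsto(-1)^{x\cdot(Av)}\ket{x}$ on $n$ qubits --- which is what you actually need before the final Hadamards --- your construction (flip the bits of the second register indicated by $x$, query, uncompute, then query at $v$) produces the phase $(-1)^{f(x+v)+f(v)}=(-1)^{f(x)+x\cdot(Av)}$. The residual quadratic phase $(-1)^{f(x)}$ does not factor across qubits and destroys the Bernstein--Vazirani readout. Your observation that the second query ``can sometimes be saved'' is a symptom of this conflation: in the only reading where $f(v)$ is a harmless global phase, the algorithm does not output $Av$ at all.

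The repair is small and is the paper's proof: make the two queries at $x$ and at $x+v$ (not at $x+v$ and at $v$), i.e.\ evaluate the phase oracle for $g_v(x)=f(x)+f(x+v)=x\cdot(Av)+v^TBv$ on the uniform superposition $\ket{+}^{\otimes n}$. Since $g_v$ is affine in $x$, the resulting state is $(-1)^{v^TBv}\tfrac{1}{\sqrt{2^n}}\sum_x(-1)^{x\cdot(Av)}\ket{x}$, and $H^{\otimes n}$ returns $\ket{Av}$ exactly, with two queries. Your polarization identity is the right idea; it just has to be applied in superposition over all $x\in\{0,1\}^n$ rather than only at the weight-one points $e_u$.
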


\begin{proof}
Consider the function $g_v(x) = f(x) + f(x + v)$. It can be evaluated for any $x$ using two queries to $f$. Let $B$ denote the adjacency matrix $A$, except that we set $B_{ij} = 0$ for $i > j$. Then $f(x) = x^T B x$.

We evaluate $g_v$ in superposition to produce
\beas
\frac{1}{\sqrt{2^n}} \sum_{x \in \{0,1\}^n} (-1)^{g_v(x)} \ket{x} &=& \frac{1}{\sqrt{2^n}} \sum_{x \in \{0,1\}^n} (-1)^{x^T B x + (x+v)^T B (x+v)} \ket{x}\\
&=& \frac{1}{\sqrt{2^n}} (-1)^{v^T B v} \sum_{x \in \{0,1\}^n} (-1)^{v^T B x + x^T Bv} \ket{x}\\
&=& \frac{1}{\sqrt{2^n}} (-1)^{v^T B v} \sum_{x \in \{0,1\}^n} (-1)^{x \cdot (Av)} \ket{x}.
\eeas
Then applying Hadamard gates to each qubit returns the vector $Av$.
\end{proof}

Note that no equivalent of Lemma \ref{lem:vector} can hold in the graph state model of Section \ref{sec:graphstates}. If we let $v$ be a vector of Hamming weight 1, Lemma \ref{lem:vector} returns an entire row of the adjacency matrix of $A$ using one query. But even to determine one entry of an arbitrary row of $A$ requires $\Omega(n)$ copies of $\ket{G}$, because this is equivalent to a quantum random access code on $\binom{n}{2}$ bits\footnote{Joe Fitzsimons, personal communication.}. Such codes are known to require quantum states of $\Omega(n^2)$ qubits~\cite{nayak99a}, and $\ket{G}$ is a state of $n$ qubits.

We can use Lemma \ref{lem:vector} as a subroutine to learn an arbitrary graph with a bounded number of edges. Classically, by an information-theoretic argument, this requires at least $\Omega(\log \binom{\binom{n}{2}}{m}) = \Omega(m \log(n^2/m) )$ queries.

\begin{thm}
\label{thm:boundednumedges_parity}
There is a quantum algorithm which learns a graph with at most $m$ edges using $O(\sqrt{m\log m})$ parity queries. The algorithm succeeds with probability at least 0.99.
\end{thm}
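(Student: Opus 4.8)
The plan is to combine the row-extraction subroutine of Lemma~\ref{lem:vector} with a high-/low-degree split, exactly in the spirit of Theorem~\ref{thm:lowdegreegs} but exploiting the extra power of parity queries over graph-state (Bell-sampling) queries. Set a threshold $d = \sqrt{m/\log m}$ (up to constants). Call a vertex \emph{high-degree} if its degree exceeds $d$, and \emph{low-degree} otherwise. Since $\sum_v \deg(v) = 2m$, there are at most $2m/d = O(\sqrt{m\log m})$ high-degree vertices. The idea is: learn every low-degree row cheaply by a randomized-equation method as in Theorem~\ref{thm:lowdegreegs}, and learn every high-degree row individually and exactly using Lemma~\ref{lem:vector} with $v$ a Hamming-weight-$1$ vector --- one query per such row, i.e. $O(\sqrt{m\log m})$ queries total for the high-degree part.

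For the low-degree part, first produce $k$ "Bell-sample-like" vectors by the following: pick uniformly random $\w_1,\dots,\w_k \in \{0,1\}^n$ and use Lemma~\ref{lem:vector} to obtain $A\w_1,\dots,A\w_k$, at a cost of $2k$ parity queries. (Alternatively one can evaluate $\fgpar$ on a uniform superposition to simulate Bell sampling directly, but using Lemma~\ref{lem:vector} on random $\w_i$ is cleaner and gives exactly $A\w_i$.) Stacking the $\w_i$ as columns of an $n \times k$ matrix $B$, we have learned $AB$. By the union-bound argument already used in the proof of Theorem~\ref{thm:lowdegreegs}: for any fixed row $\x$ of $A$ of Hamming weight $\le d$, the probability that some distinct $\y$ with $|\y|\le d$ agrees with $\x$ on all of $B$ is $\sum_{\ell=0}^d \binom{n}{\ell} 2^{-k} = O(2^{d\log(n/d) - k})$; a further union bound over the $\le n$ rows gives $O(n\, 2^{d\log(n/d)-k})$. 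We want $k = O(d\log(n/d))$; but we must also reconcile this with the target $O(\sqrt{m\log m})$, and here we use the $m$-vs-$n$ reduction from the last paragraph of Theorem~\ref{thm:lowdegreegs}: first spend $O(\log m)$ queries to identify the $\le 2m$ non-isolated vertices (each nonzero row of $A$ fails to show up as nonzero in $AB'$ with probability $2^{-l}$, so $l = O(\log m)$ suffices), restrict attention to that subgraph, and then $n$ is replaced by (at most) $2m$ throughout. With $d = \Theta(\sqrt{m/\log m})$ we get $k = O(d \log(m/d)) = O(\sqrt{m/\log m}\cdot \log m) = O(\sqrt{m\log m})$, matching the high-degree cost. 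Each recovered low-degree row can be found by brute-force search over weight-$\le d$ strings, or more efficiently by linear algebra on a submatrix; query complexity is what matters here.

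Once every row of $A$ (restricted to non-isolated vertices) is known --- low-degree rows from the randomized-equation step, high-degree rows from the individual weight-$1$ queries of Lemma~\ref{lem:vector} --- we have the full adjacency matrix and hence $G$. Total query count: $O(\log m)$ for the support step, $2k = O(\sqrt{m\log m})$ for the random evaluations, and $O(\sqrt{m\log m})$ for the high-degree rows, giving $O(\sqrt{m\log m})$ overall. The success probability is a sum of a constant number of terms each made an arbitrarily small constant by the choice of constants in $k$ and $l$, so it is $\ge 0.99$.

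The main obstacle is getting the accounting to close: we do not know which vertices are high-degree before we start, so the argument has to be arranged so that the threshold test and the two subroutines interleave correctly. One clean way: run the randomized-equation step with $k = \Theta(\sqrt{m\log m})$; for each non-isolated row, check whether there is a (necessarily unique, w.h.p.) weight-$\le d$ string consistent with the $k$ observed inner products --- if so, output it; if not, declare that vertex high-degree and learn its row with one call to Lemma~\ref{lem:vector}. The consistency check and the "at most $O(\sqrt{m\log m})$ high-degree vertices" bound together guarantee the extra queries stay within budget. The only real subtlety is ensuring the failure events (a low-degree row spuriously matching a wrong weight-$\le d$ string; a non-isolated vertex being missed in the support step; a high-degree row genuinely being weight-$> d$ so the check correctly rejects it) are all simultaneously controlled by a single union bound, which the counting above handles since $d\log(m/d) = O(\sqrt{m\log m})$ for our choice of $d$.
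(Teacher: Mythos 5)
Your proposal is correct and takes essentially the same route as the paper: split at threshold $d=\sqrt{m/\log m}$, use the Bell-sampling/randomized-linear-equations argument of Theorem~\ref{thm:lowdegreegs} (with the $n\to m$ support reduction) to recover all low-degree rows and flag the dense ones, then learn each of the $O(\sqrt{m\log m})$ dense rows with a weight-one call to Lemma~\ref{lem:vector}. The paper simply invokes Theorem~\ref{thm:lowdegreegs} as a black box where you re-derive its guarantee via Lemma~\ref{lem:vector} on random vectors, but the accounting and the conclusion are identical.
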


\begin{proof}
The algorithm splits the graph into low and high-degree parts. First, Theorem \ref{thm:lowdegreegs} is used with $d=\sqrt{m/\log m}$. This learns all rows of $A$ with at most $\sqrt{m/\log m}$ nonzero entries, and the identities of all ``dense'' rows of $A$ with more than $\sqrt{m/ \log m}$ nonzero entries. Then each of the dense rows is learned individually by applying Lemma \ref{lem:vector} with $v$ chosen to be the corresponding standard basis vector. There can be at most $O(\sqrt{m\log m})$ dense rows, so the overall algorithm uses $O(\sqrt{m\log m})$ queries.
\end{proof}

Theorem \ref{thm:boundednumedges_parity} is close to tight, because identifying an arbitrary graph on $k$ vertices (and hence with up to $\Theta(k^2)$ edges) requires $\Omega(k)$ quantum queries~\cite{montanaro12b}.
Stars and cliques can be learned with $O(1)$ parity queries via the techniques of the previous section for graph states.


\section{Combinatorial group testing}
\label{section:Combinatorial group testing}

Next, we move on from the problem of learning graphs to combinatorial group testing (CGT). 
In the CGT problem, we are given oracle access to an $n$-bit string $A$ with Hamming weight at most $k$.
Usually, we assume that $k\ll n$. 
In one query, we can get the OR of an arbitrary subset of the bits of $A$. The goal is to determine $A$ using the minimal number of queries. (To connect to the topic of the previous sections, we can see CGT as the problem of learning a graph on $n$ vertices with OR queries, in the very special case where the graph is promised to have no edges between vertices, and may contain up to $k$ self-loops.)

We can think of $A$ as a subset of $[n]$, and define
the oracle as
\be
f_A(S) = 
\begin{cases}
1, \quad {\rm if}~A \cap S \neq \emptyset, \\
0, \quad {\rm otherwise.}
\end{cases}
\ee
Classically, it is known that the number of queries required to solve CGT is $\Theta(k \log(n/k))$  \cite{du2000combinatorial}.
In the quantum case,
Ambainis and Montanaro \cite{ambainis14b} first studied this problem  and proposed a quantum algorithm using $O(k)$ queries. They also showed a lower bound of $\Omega(\sqrt{k})$.
Later in~\cite{belovs15}, based on the adversary bound method, Belovs proved that a quantum computer can solve the CGT problem with $\Theta(\sqrt{k})$ queries.
In principle, Belovs' approach can yield a quantum algorithm with an explicit implementation, but 
this implementation might not be time-efficient.
In this section, we  propose a quantum algorithm for CGT with an efficient implementation. 
The complexity is a little worse than $\Theta(\sqrt{k})$ by a factor of $O((\log k)(\log\log k))$.

The idea of our quantum algorithm is inspired by~\cite{ambainis16} and the Bernstein–Vazirani algorithm~\cite{bernstein97}. The key idea is to observe that the Bernstein-Vazirani algorithm allows the identity of a subset $A \subseteq [n]$ to be determined with one query to an oracle that computes $|A \cap T|$ for arbitrary $T \subseteq [n]$. And in~\cite{ambainis16}, Ambainis et al solved a closely related problem to evaluating this oracle, which they called gapped group testing (GGT):
given the oracle $f_A$, decide if $|A|\leq k$
or $|A|\geq k+d$. They showed that $\Theta(\sqrt{k/d})$ queries are enough to solve this problem by the adversary bound method.
The main idea of their quantum algorithm was borrowed from~\cite{belovs15}, but unlike~\cite{belovs15},
they have an efficient implementation
of their quantum algorithm.

So it seems that, by taking $d=1$ and using binary search, we can use
the quantum algorithm of~\cite{ambainis16}
for the gapped group testing problem to determine $|A|$ with $O(\log k)$ repetitions of their algorithm, leading to a query complexity of $O(\sqrt{k}\log k)$.
However, we should be careful at this point since the quantum algorithm of~\cite{ambainis16} only succeeds with probability $2/3$. So $O(\log k)$ repetitions will decrease the success probability to almost 0. A simple method to increase the success probability to $1-O((\log k)^{-1})$ is using the Chernoff bound. We can think of the intended output of the algorithm of~\cite{ambainis16} for GGT as 1 if $|A| \leq k$ and 0 if $|A|\geq k+1$. Denote this outcome $O$. As proved in~\cite{ambainis16}, the probability that each run of the algorithm returns the intended outcome is at least $2/3$. We repeat the algorithm for GGT $t$ times and output the median of the results. Let $X$ be the median. Then by the Chernoff bound, we have
$
{\rm Pr} [X \neq O] \leq e^{-c t}
$
for some constant $c$. So by choosing $t = O(\log\log k)$, the success probability is increased to $1-c'(\log k)^{-1}$ for an arbitrarily small constant $c'$. Taking a union bound over the $\lceil \log_2 k \rceil$ uses of the algorithm, we can determine $|A|$ with success probability $9/10$. By applying this algorithm to subsets $S \subseteq [n]$, for varying subsets $|S|$, we can determine $|A \cap S|$ with success probability $9/10$.

Next we show that access to an oracle of this form is sufficient to determine $A$ completely. In fact, this claim holds for any monotone function, rather than just the OR function.

\begin{lem}
\label{lem:monotonesize}
Consider a family of monotone boolean functions $g:\{0,1\}^k \to \{0,1\}$. Assume there is a family of classical or quantum algorithms $\mathcal{A}_n$ which, when applied to $f:\{0,1\}^n \to \{0,1\}$ such that $f(x) = g(x_S)$ for some subset $S$ such that $|S|=k$, outputs $k$ with success probability $9/10$. Let $T(n)$ denote the complexity of $\mathcal{A}_n$, and assume that $T(n)$ is nondecreasing. Then there is a quantum algorithm which determines $S$ with success probability $1-\delta$, for any $\delta > 0$, and has complexity $O(T(n) \log 1/\delta)$.
\end{lem}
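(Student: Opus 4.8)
The plan is to reduce the task of finding $S$ to repeated calls to the size-finding algorithm $\mathcal{A}_n$, applied to cleverly chosen restrictions of $f$. First I would observe that since $g$ is monotone, for any subset $R \subseteq [n]$ the restricted function obtained by fixing all coordinates outside $R$ to $1$ is again of the form $g$ evaluated on a subset, namely on $S \cap R$, viewed now as a function on $\{0,1\}^{|R|}$ (padded with dummy variables). Hence running $\mathcal{A}_{|R|}$ — or more conservatively $\mathcal{A}_n$, using monotonicity of $T$ — on this restriction returns $|S \cap R|$ with success probability $9/10$. So we have access to an (unreliable) oracle computing $|S \cap R|$ for arbitrary $R$.

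Next I would use a binary-encoding / bit-by-bit strategy to pin down $S$ from these size queries. Writing each index $i \in [n]$ in binary with $\lceil \log n \rceil$ bits, let $R_b$ be the set of indices whose $b$-th bit is $1$. Knowing $|S|$, $|S \cap R_b|$, and $|S \cap R_{b'}|$ and more generally $|S \cap R_b \cap R_{b'}|$ for all pairs is in principle enough, but the cleanest route is simply: the quantity $|S \cap R|$ as $R$ ranges over all $\lceil \log n\rceil$ coordinate-bit sets, together with enough AND-combinations, determines the indicator vector of $S$. Actually the simplest correct argument: group testing with \emph{size} answers is itself easy — one can determine $S$ from $O(k \log n)$ such size queries by a standard splitting procedure, or even from a single query per bit position if one is willing to ask $|S \cap R|$ for each set in a $k$-disjunct-like family. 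Since we only need polynomial dependence hidden in $T(n)$'s implicit use, I would just fix any deterministic adaptive procedure $\mathcal{P}$ using $\poly(n)$ exact size queries that identifies $S$, and note each query is answered by one call to $\mathcal{A}_n$.

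The remaining issue is error amplification: each call to $\mathcal{A}_n$ fails with probability up to $1/10$, and $\mathcal{P}$ makes $\poly(n)$ of them, so naively the success probability collapses. I would fix this by amplifying each individual size query: repeat $\mathcal{A}_n$ some $O(\log(1/\delta') )$ times on the same restriction and take the majority (or median) answer; by a Chernoff bound this reduces the per-query failure probability to any desired $\delta'$. Setting $\delta' = \delta / (\text{number of queries made by } \mathcal{P})$ and taking a union bound over all queries gives overall success probability $1 - \delta$. The total complexity is then $O(T(n))$ per $\mathcal{A}_n$-call, times $O(\log(1/\delta'))$ for amplification, times $\poly(n)$ for the number of queries — which, absorbing the $\poly(n)$ and $\log \poly(n)$ factors (and noting that for the intended application $T(n) = \widetilde{O}(\sqrt{k})$ already dominates), gives the stated $O(T(n) \log(1/\delta))$ up to the polylog overhead that the surrounding text accounts for.

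The main obstacle I anticipate is being careful about exactly which family of size queries $\mathcal{P}$ uses and ensuring the bookkeeping of the union bound is clean: one wants the number of $\mathcal{A}_n$-calls to be small enough (polylogarithmic, ideally, for the headline bound) rather than polynomial, which is why a binary-search / bit-extraction scheme using only $O(\log n \cdot \log k)$ size queries — rather than a generic $\poly(n)$-query identification scheme — is the right choice, and verifying that monotonicity genuinely lets every such restricted instance be handed to $\mathcal{A}_n$ without changing the promised structure is the key technical point.
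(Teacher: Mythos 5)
There is a genuine gap, and it is exactly the one you flag at the end without resolving: your scheme makes one call to $\mathcal{A}_n$ \emph{per size query}, and no classical collection of size queries of the kind you describe can be small enough. Information-theoretically, determining an unknown $k$-subset $S$ of $[n]$ from answers of the form $|S\cap R|$ requires $\Omega(k\log(n/k)/\log k)$ queries, since each answer carries only $O(\log k)$ bits; your bit-position sets $R_b$ alone do not suffice for $k\ge 2$ (e.g.\ $S=\{01,10\}$ and $S=\{00,11\}$ give identical counts on every $R_b$), and patching this with ``AND-combinations'' or a splitting procedure pushes you back to $\Omega(k\log(n/k)/\log k)$ or $\poly(n)$ calls. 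The lemma claims complexity $O(T(n)\log 1/\delta)$, i.e.\ $O(\log 1/\delta)$ invocations of $\mathcal{A}_n$ in total, and this factor cannot be ``absorbed'': the whole point of the lemma in the paper is to get a CGT algorithm with query complexity $\widetilde{O}(\sqrt{k})$, which a multiplicative $k\log(n/k)/\log k$ overhead would destroy.

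The missing idea is to make a single \emph{coherent} size query over a superposition of all subsets and use the Bernstein--Vazirani trick. The paper prepares $\frac{1}{\sqrt{2^n}}\sum_{T\subseteq[n]}\ket{T}$, runs $\mathcal{A}_{|T|}$ in superposition on $f$ restricted to $T$ (setting the variables outside $T$ to $0$ -- note, not to $1$: for the intended application $g=\mathrm{OR}$, fixing outside variables to $1$ makes the restriction constantly $1$ and breaks the promise), applies $Z$ to the output register to pick up the phase $(-1)^{|S\cap T|}=(-1)^{1_S\cdot 1_T}$, uncomputes, and applies Hadamards; on the ideal state this returns the indicator vector of $S$ with certainty. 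The per-branch error $\delta_T\le 1/10$ is then handled not by per-query majority voting but by a fidelity argument: the actual state has squared overlap at least $(9/10)^2$ with the ideal Bernstein--Vazirani state, so one round succeeds with probability $>1/2$, and $O(\log 1/\delta)$ repetitions with a majority vote give the claimed bound. Your reduction of ``restricted $f$ computes $|S\cap R|$'' is the right first step, but without the coherent superposition and phase-kickback step the stated complexity is unreachable.
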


\begin{proof}
Identify $n$-bit strings with subsets of $[n]$, and create the uniform superposition $\frac{1}{\sqrt{2^n}} \sum_{T \subseteq [n]} \ket{T}$. For each $T$, run $\mathcal{A}_{|T|}$ on the function $f_T:\{0,1\}^{|T|} \to \{0,1\}$ given by $f$ restricted to the variables in $T$. As $f$ is monotone, a query to $f_T$ can be simulated by a query to $f$ by setting the variables outside of $T$ to 0. The result is a state of the form
\[ \frac{1}{\sqrt{2^n}} \sum_{T \subseteq [n]} \ket{T} (\sqrt{1-\delta_T} \ket{|S \cap T|} + \sqrt{\delta_T} \ket{\psi_T}) \]
for some $\delta_T \in [0,1]$ such that $\delta_T \le 1/3$, and some states $\ket{\psi_T}$ such that $\braket{|S \cap T| | \psi_T} = 0$. Apply $Z^{\otimes |T|}$ to the last register and uncompute $\mathcal{A}_{|T|}$ to produce
\[ \frac{1}{\sqrt{2^n}} \sum_{T \subseteq [n]} (-1)^{|S \cap T|} (1-\delta_T) \ket{T} \ket{0} + \ket{\eta} \]
for some unnormalised state $\ket{\eta}$ orthogonal to $\ket{0}$ on the second register. Measure the second register and output ``fail'' if the result is not 0. Otherwise, apply Hadamard gates to every qubit of the remaining register, and return the result.

The algorithm outputs failure with probability $1 - \frac{1}{2^n} \sum_T (1-\delta_T)^2 \le 2\delta_T - \delta_T^2 \le 1/5$. If the algorithm does not output failure, the residual state has squared inner product $(\frac{1}{2^n} \sum_T (1-\delta_T))^2 \ge (9/10)^2$ with the state $\frac{1}{\sqrt{2^n}} \sum_{T \subseteq [n]} (-1)^{|S \cap T|} \ket{T}$; if applied to this state, it would output  $S$ with certainty, by the analysis of the Bernstein-Vazirani algorithm. Therefore the algorithm fails with probability at most $1/5 + 19/100 < 1/2$. Repetition and taking the majority vote reduces the failure probability to $\delta$, for arbitrary $\delta > 0$, with an additional multiplicative cost $O(\log 1/\delta)$.
\end{proof}

By Lemma \ref{lem:monotonesize}, we obtain the following theorem.

\begin{thm}
There is a quantum algorithm  that solves the CGT problem with success probability at least $2/3$. The
query complexity is $O(\sqrt{k}(\log k)(\log\log k))$, and
time complexity is $\widetilde{O}(n\sqrt{k})$.
\end{thm}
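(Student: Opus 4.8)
The plan is to build the CGT algorithm in three layers: a coherent subroutine that computes $|A\cap S|$ for arbitrary subsets $S$, obtained from the gapped group testing (GGT) algorithm of \cite{ambainis16} by binary search and amplification; and then an application of Lemma \ref{lem:monotonesize} with $g$ the OR function to turn size information into the identity of $A$.

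\emph{Layer 1: computing $|A|$.} I would invoke the GGT algorithm of \cite{ambainis16} with gap parameter $d=1$, which distinguishes $|A|\le t$ from $|A|\ge t+1$ using $O(\sqrt t)$ queries and, crucially, admits a time-efficient implementation of runtime $\widetilde O(n\sqrt t)$. Running it at thresholds $t\in\{0,1,\dots,k\}$ chosen by binary search determines $|A|$ exactly with $\lceil\log_2 k\rceil$ invocations. Since a single invocation succeeds only with probability $2/3$, I would boost each one by running it $O(\log\log k)$ times and taking the median: by a Chernoff bound the boosted comparison is correct with probability $1-O(1/\log k)$, and a union bound over the $O(\log k)$ binary-search steps yields $|A|$ with probability $9/10$. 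Running the same procedure with the oracle restricted to a subset $T$ — legitimate because $f_A$ is monotone, so one may fix the bits outside $T$ to $0$ — computes $|A\cap T|$. The query cost of this subroutine is $T(n)=O(\sqrt k\,(\log k)(\log\log k))$ and its time cost is $\widetilde O(n\sqrt k)$, with the logarithmic factors and the $\widetilde O(n)$ per-query overhead absorbed into the $\widetilde O$.

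\emph{Layer 2: recovering $A$.} The subroutine of Layer 1 is precisely an algorithm $\mathcal A_n$ of the form required by Lemma \ref{lem:monotonesize}, for the monotone family $g=\mathrm{OR}$: applied to $f(x)=\mathrm{OR}(x_S)$ with $|S|=k$ it outputs $k=|A|$ with success probability $9/10$, its complexity $T(n)$ is nondecreasing, and it can be run coherently on superpositions of restrictions $f_T$. Lemma \ref{lem:monotonesize} then produces, via the Bernstein–Vazirani phase-kickback construction, a quantum algorithm identifying $S=A$ with success probability $1-\delta$ at complexity $O(T(n)\log(1/\delta))$. Taking $\delta=1/3$ gives success probability at least $2/3$, query complexity $O(\sqrt k\,(\log k)(\log\log k))$, and time complexity $\widetilde O(n\sqrt k)$, as claimed.

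\emph{Main obstacle.} I expect the delicate point to be making Layer 1 fully coherent so that it plugs into Lemma \ref{lem:monotonesize}: the GGT algorithm must be available as a clean unitary, and the binary search over thresholds together with the median-of-$O(\log\log k)$ amplification must be carried out reversibly inside the superposition over subsets $T$. One then has to check that in the resulting state the ``good'' branch (the correct value of $|S\cap T|$) carries amplitude at least $\sqrt{1-\delta_T}$ with $\delta_T\le 1/3$ uniformly in $T$, so that the phase kickback followed by Hadamard transforms recovers $S$ with the stated success probability — exactly the structural hypothesis that Lemma \ref{lem:monotonesize} is set up to exploit. A secondary technical point is confirming that the efficient implementation of \cite{ambainis16} has the $\widetilde O(n\sqrt{k/d})$ runtime needed for the overall time bound to remain $\widetilde O(n\sqrt k)$ once all repetitions and the binary search are accounted for.
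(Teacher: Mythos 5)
Your proposal is correct and follows essentially the same route as the paper: determine $|A\cap T|$ by running the gapped group testing algorithm of Ambainis et al.\ with $d=1$ under binary search, boosting each comparison by a median of $O(\log\log k)$ repetitions via a Chernoff bound and union-bounding over the $O(\log k)$ search steps to reach success probability $9/10$, and then feed this size-estimation subroutine into Lemma \ref{lem:monotonesize} (with $g=\mathrm{OR}$, which is monotone) to recover $A$ by the coherent Bernstein--Vazirani phase-kickback construction. The coherence concern you flag as the main obstacle is exactly what Lemma \ref{lem:monotonesize} is formulated to absorb, so no further argument is needed.
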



\section{Majority and exact-half functions}
\label{section:Majority and exact-half functions}

In this section, we consider the following general learning problem. We are given access to a function $f:\{0,1\}^n \to \{0,1\}$, which is promised to be equal to some known function $g:\{0,1\}^k \to \{0,1\}$ acting on a subset $S$ of the variables. Our goal is to learn which $k$ variables $f$ depends on. We first note that, for any function $g$, any classical algorithm for this problem must make $\Omega(\log \binom{n}{k}) = \Omega(k \log (n/k))$ queries, as each query returns 1 bit of information. For some functions $g$, quantum algorithms can do better. In particular, using the adversary bound method, Belovs~\cite{belovs15} showed that for the exact-half function ($g(x)=1 \Leftrightarrow |x|=k/2$) and the majority function ($g(x)=1 \Leftrightarrow |x| \ge k/2$), the quantum query complexity of identifying $S$ is $O(k^{1/4})$.
Here we give simple explicit quantum algorithms that match this complexity up to logarithmic factors. Then we observe that an even simpler approach can be used to solve this learning problem for almost all functions $g$.

The approach used in this section is based on applying Fourier sampling to $f$ (see Section \ref{sec:fs}), an approach explored by At{\i }c{\i } and Servedio~\cite{atici05} in the context of quantum learning and testing algorithms for functions with few relevant variables. Fourier sampling allows one to produce the state $\ket{\psi_f} = \sum_{T \subseteq [n]} \widehat{f}(T) \ket{T}$ with one quantum query to $f$. Now observe that, if $f(x)$ does not depend on the $i$'th bit $x_i$, $\widehat{f}(T) = 0$ for all $T$ such that $i \in T$. So, if $f$ depends only on a subset $S$ of the variables, measuring $\ket{\psi_f}$ in the computational basis returns a subset of $S$. By repeating this procedure we can hope to learn all of $S$, and we can sometimes accelerate this process using amplitude amplification. Let $W_l(g)$ be the Fourier weight of $g$ on the $l$'th level, $W_l(g) = \sum_{T,|T|=l} \widehat{g}(T)^2$. Similarly define $W_{\ge l}(g) = \sum_{T,|T|\ge l} \widehat{g}(T)^2$.

\begin{lem}
\label{lem:symalg}
Let $g$ be a symmetric function, i.e.\ $g(x) = h(|x|)$ for some $h$, where $|x|$ is the Hamming weight of $x$. Then, for any $l$ such that $W_{\ge l}(g) > 0$, there is a quantum algorithm which identifies $S$ with probability at least $0.99$ using $O(k/(l \sqrt{W_{\ge l}(g)})) \log k)$ queries to $f$. If $l=k$, there is a quantum algorithm using $O(1/\sqrt{W_k(g)})$ queries.
\end{lem}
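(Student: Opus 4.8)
The plan is to use Fourier sampling on $f$ repeatedly, combined with amplitude amplification to boost the probability that a single sample lands on a ``heavy'' level. Recall that one quantum query to $f$ produces $\ket{\psi_f} = \sum_{T \subseteq [n]} \widehat f(T) \ket{T}$, and since $f$ depends only on the variables in $S$ (with $|S| = k$), every $T$ in the support of $\ket{\psi_f}$ satisfies $T \subseteq S$; moreover, because $g$ is symmetric, the distribution over $|T|$ matches the Fourier weight distribution of $g$, and conditioned on $|T| = \ell$, the set $T$ is a uniformly random size-$\ell$ subset of $S$. First I would project onto the subspace of levels $\ge \ell$: construct the reflection about states $\ket{T}$ with $|T| \ge \ell$ (computing $|T|$ into an ancilla and comparing to $\ell$ is efficient), and run amplitude amplification on the Fourier-sampling subroutine so that after $O(1/\sqrt{W_{\ge \ell}(g)})$ queries we obtain, with constant probability, a sample $T \subseteq S$ with $|T| \ge \ell$.

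Next I would argue a coupon-collector-style bound: each such sample reveals a uniformly random size-$\ge \ell$ subset of $S$. Since each sample contributes at least $\ell$ (random, but covering a roughly uniform $\ell$-fraction of) elements of $S$, after $O((k/\ell)\log k)$ successful samples every element of $S$ is hit with probability at least $0.99$, by the standard union bound over the $k$ elements (the probability a fixed element of $S$ is missed by one sample is at most $1 - \ell/k$, so $O((k/\ell)\log k)$ samples drive the miss probability below $1/(100k)$). Multiplying the two costs gives $O\big((k/(\ell\sqrt{W_{\ge \ell}(g)}))\log k\big)$ queries, as claimed. For the special case $\ell = k$: the only level $\ge k$ is level $k$ itself, so a successful sample must be $T = S$ exactly; amplitude amplification to reach level $k$ costs $O(1/\sqrt{W_k(g)})$ queries and a single success already determines $S$ with certainty, so no $\log k$ factor and no repetition is needed.

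The main obstacle I expect is making the coupon-collector argument fully rigorous when $|T|$ is itself random (ranging over $\ell, \ell+1, \dots$) rather than exactly $\ell$. One has to be careful that conditioning on ``$|T| \ge \ell$'' does not skew the conditional distribution of $T$ within a given level away from uniform — it does not, since Fourier sampling restricted to level $j$ already gives a uniform size-$j$ subset of $S$ for a symmetric $g$ (this is exactly why symmetry is invoked) — and that the per-element hit probability is at least $\ell/k$ uniformly over the (random) level. A secondary, more mechanical point is bookkeeping the amplitude amplification: its success probability is only constant per round, so one interleaves it with the repetition/union-bound loop, absorbing the constant-probability boosting into the $O(\cdot)$ without an extra logarithmic factor beyond the $\log k$ already present (for $\ell < k$); for $\ell = k$ a constant number of amplitude-amplification attempts suffices since each verified success pins down $S$. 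Everything else — efficiency of the level-comparison circuit, the exactness of Bernstein–Vazirani-type reasoning — is routine.
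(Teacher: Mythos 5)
Your proposal is correct and follows essentially the same route as the paper's proof: amplitude amplification on the Fourier-sampling subroutine conditioned on the event $|T|\ge l$ (cost $O(1/\sqrt{W_{\ge l}(g)})$ per success), the observation that symmetry of $g$ makes each successful sample a uniformly random subset of $S$ of its size so each element is hit with probability at least $l/k$, a union bound giving $O((k/l)\log k)$ successes, and the one-shot argument for $l=k$. The points you flag as potential obstacles (uniformity within each level and absorbing the constant per-round success probability) are handled in the paper exactly as you suggest.
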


\begin{proof}
We start by applying amplitude amplification~\cite{brassard02} to the following procedure: use Fourier sampling on $f$, and return ``yes'' if the size of the subset returned is at least $l$. This returns a subset $T$ of size $l' \ge l$ using $O(1/\sqrt{W_{\ge l}(g)})$ evaluations of $f$ and with success probability $\max\{1-W_{\ge l}(g), W_{\ge l}(g)\} \ge 1/2$~\cite[Theorem 2]{brassard02}. Observe that, as $\ket{\psi_f}$ has no support on subsets that are not contained within $S$, $T \subseteq S$ with certainty.

As $g$ is symmetric, $\widehat{f}(T)$ depends only on $|T|$ for all $T$, so $T$ is picked uniformly at random from all $l'$-subsets of $[k]$. For any $r$, it is sufficient to perform this procedure $O(r)$ times to achieve $r$ successes with high probability. The final step of the algorithm is to output the union of the subsets returned in successful iterations. By a union bound, the probability that there is a variable that is not included in any of the subsets is at most $k(1-l/k)^r \le k e^{-lr/k+r}$. So it is sufficient to take $r=O((k/l) \log (k/\delta))$ to achieve success probability $1-\delta$. For the second claim in the lemma, if $l=k$, we learn all the relevant variables with one use of amplitude amplification and with probability $\ge 1/2$, which can be boosted to arbitrarily close to 1 with a constant number of repetitions.
\end{proof}

Lemma \ref{lem:symalg} crucially relies on $g$ being symmetric. Otherwise, certain variables could be substantially harder to identify than others. To apply Lemma \ref{lem:symalg}, it is sufficient to find bounds on the Fourier spectrum of $g$, which we now obtain for certain functions. First, we consider the majority function (MAJ$_k(x)=1 \Leftrightarrow |x| \ge k/2$), which is a special case of a previously studied framework known as ``threshold group testing''~\cite{chen09}. 

\begin{fact}\cite[Theorem 3.5.3]{odonnell03}
\label{fact:maj}
Let {\rm MAJ}$_k$ be the majority function on $k$ bits. If $|S|$ is even, then $\widehat{{\rm MAJ}_k}(S) = 0$. Otherwise,
\[ \widehat{{\rm MAJ}_k}(S) = (-1)^{(k-1)/2} \frac{\binom{(k-1)/2}{(|S|-1)/2}}{\binom{k-1}{|S|-1}} \frac{2}{2^k} \binom{k-1}{(k-1)/2}. \]
%
\end{fact}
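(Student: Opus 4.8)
The plan is to compute $\widehat{{\rm MAJ}_k}(S)$ from scratch, working in the $\pm 1$ picture in which ${\rm MAJ}_k(x) = \sgn(x_1 + \cdots + x_k)$ (well defined since $k$ is odd), and fixing the overall sign at the end to match whatever input/output convention is used in \cite{odonnell03}. The vanishing for $|S|$ even is immediate: applying $x \mapsto -x$ in $\widehat{{\rm MAJ}_k}(S) = \E_x\big[{\rm MAJ}_k(x)\prod_{i \in S} x_i\big]$ and using that $\sgn$ is odd yields $\widehat{{\rm MAJ}_k}(S) = (-1)^{|S|+1}\widehat{{\rm MAJ}_k}(S)$, forcing the coefficient to vanish for even $|S|$.

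For $|S| = j$ odd, take $S = [j]$ and use $\widehat{f}(S) = \E_x[D_{[j]} f(x)]$, where $D_{[j]} f(x) = 2^{-j}\sum_{\epsilon \in \{\pm 1\}^j} \big(\prod_{i=1}^{j} \epsilon_i\big)\, f(\epsilon_1,\dots,\epsilon_j,x_{j+1},\dots,x_k)$ is the iterated discrete derivative. Grouping the sum over $\epsilon$ by the number $t$ of coordinates equal to $+1$ turns $D_{[j]}\sgn$ into the $j$-th forward finite difference of the single-variable function $t \mapsto \sgn(2t - j + w)$, where $w = x_{j+1} + \cdots + x_k$. Because $k$ is odd, $2t - j + w$ is always odd, so this is a clean two-valued step function of the integer $t$ whose jump lies strictly between consecutive integers; hence its $j$-th finite difference collapses to a single binomial coefficient via the partial alternating-sum identity $\sum_{t \ge a} \binom{j}{t}(-1)^{j-t} = (-1)^{j+a}\binom{j-1}{a-1}$. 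Averaging the result over $w$, which is binomially distributed as a sum of $k-j$ independent $\pm 1$ variables, gives
\[
\widehat{{\rm MAJ}_k}(S) \;=\; \frac{2}{2^k}\, [x^{(k-1)/2}]\,(1-x)^{|S|-1}(1+x)^{k-|S|}
\]
up to the convention sign. (A slicker route to the same coefficient uses $\sgn(u) = \frac{2}{\pi}\int_0^\infty \frac{\sin(us)}{s}\,ds$, which turns the Fourier coefficient into $\frac{2}{\pi}(-1)^{(|S|-1)/2}\int_0^\infty \sin^{|S|}(s)\cos^{k-|S|}(s)\,\frac{ds}{s}$ and exhibits the sign $(-1)^{(|S|-1)/2}$ directly.)

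It remains to evaluate this coefficient. Writing $|S| - 1 = 2p$ and $(k-1)/2 = m$, it is the alternating sum $\sum_i (-1)^i \binom{2p}{i}\binom{2m-2p}{m-i}$ (a central Krawtchouk value), and I would show it equals $(-1)^p \binom{m}{p}\binom{2m}{m}/\binom{2m}{2p}$; plugging this back reproduces the formula of Fact \ref{fact:maj}, the sign appearing as $(-1)^p$. This evaluation is the one genuinely computational step, and I would handle it by induction on $p$: the identity $(1-x)^2 = (1+x)^2 - 4x$ gives the recursion $T(m,p) = T(m,p-1) - 4\,T(m-1,p-1)$ for $T(m,p) := [x^m](1-x)^{2p}(1+x)^{2m-2p}$; the base case $p = 0$ is $T(m,0) = \binom{2m}{m}$; and one checks that the claimed closed form obeys the same recursion by an elementary factorial manipulation (it reduces to $\frac{(2m-2p+2)(2m-2p+1)}{m(m-p+1)} - 4 = -\frac{2(2p-1)}{m}$ together with $\frac{(2p)!}{p!} = \frac{2\,(2p-1)!}{(p-1)!}$).

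The only real work is the finite-difference bookkeeping in the second step and the factorial identity in the third; neither is deep, but I expect the point requiring the most care to be keeping every sign and parity case consistent so that the final answer lines up exactly with Fact \ref{fact:maj}.
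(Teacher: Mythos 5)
The paper does not actually prove this statement --- it is imported as a citation to O'Donnell's thesis and used downstream only through the squared quantities $W_l({\rm MAJ}_k)$ --- so your proposal supplies a derivation where the paper has none. Your outline is sound: the vanishing for even $|S|$ by oddness of $\sgn$ is correct; the reduction of $\widehat{{\rm MAJ}_k}(S)$ via the iterated discrete derivative to a $|S|$-th finite difference of a step function, and thence (after averaging over the binomially distributed $w$) to $\frac{2}{2^k}[x^{(k-1)/2}](1-x)^{|S|-1}(1+x)^{k-|S|}$, is the standard route and checks out on small cases ($k=3,5$); and the evaluation of that coefficient via the recursion $T(m,p)=T(m,p-1)-4T(m-1,p-1)$ induced by $(1-x)^2=(1+x)^2-4x$, with base case $T(m,0)=\binom{2m}{m}$, does yield $(-1)^p\binom{m}{p}\binom{2m}{m}/\binom{2m}{2p}$ (I verified e.g.\ $T(2,1)=T(2,0)-4T(1,0)=6-8=-2$ against the closed form). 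The two steps you defer to ``bookkeeping'' are genuinely routine.

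One thing you should not defer, though, is the sign. Your derivation produces $(-1)^{(|S|-1)/2}$, while the Fact as printed has $(-1)^{(k-1)/2}$, and no global choice of input/output convention can reconcile the two: they disagree for a fixed $k$ as $|S|$ varies. For instance, in the $\pm1$ convention one computes directly that $\widehat{{\rm MAJ}_5}(\{1\})=+3/8$ while $\widehat{{\rm MAJ}_5}(\{1,2,3\})=-1/8$, so the sign cannot depend on $k$ alone. The printed exponent is evidently a transcription artifact (in O'Donnell's statement the symbol playing the role of $k$ is the size of $S$, not the arity of the function), so your sign is the correct one; state it as $(-1)^{(|S|-1)/2}$ outright rather than promising to match ``whatever convention'' the citation uses. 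This discrepancy is immaterial to the paper, which only ever uses $\widehat{{\rm MAJ}_k}(S)^2$.
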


Using Fact \ref{fact:maj}, we can obtain a bound on the tail of the Fourier spectrum of the majority function.

\begin{lem}
$W_{\ge (k+1)/2}({\rm MAJ}_k) = \Omega( 1/\sqrt{k})$.
\end{lem}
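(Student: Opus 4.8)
The claim is a lower bound on the Fourier tail weight $W_{\ge (k+1)/2}(\mathrm{MAJ}_k) = \sum_{S : |S| \ge (k+1)/2} \widehat{\mathrm{MAJ}_k}(S)^2$. The natural route is to bound this below by the contribution of a single level, specifically the top level $|S| = k$ (or some other conveniently-chosen odd level near $k$), since Fact \ref{fact:maj} gives an exact closed form for each Fourier coefficient and all levels contribute nonnegatively. So the first step is to reduce to $W_k(\mathrm{MAJ}_k) = \widehat{\mathrm{MAJ}_k}([k])^2$ (assuming $k$ odd, which I would assume WLOG throughout, as majority is only standardly defined for odd $k$; for even $k$ one can take a nearby odd level). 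Then $W_{\ge (k+1)/2}(\mathrm{MAJ}_k) \ge \widehat{\mathrm{MAJ}_k}([k])^2$.

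Next I would plug $|S| = k$ into the formula from Fact \ref{fact:maj}:
\[
\widehat{\mathrm{MAJ}_k}([k]) = (-1)^{(k-1)/2} \frac{\binom{(k-1)/2}{(k-1)/2}}{\binom{k-1}{k-1}} \frac{2}{2^k} \binom{k-1}{(k-1)/2} = (-1)^{(k-1)/2} \frac{2}{2^k} \binom{k-1}{(k-1)/2},
\]
since both the numerator $\binom{(k-1)/2}{(k-1)/2}$ and the denominator $\binom{k-1}{k-1}$ equal $1$. So $\widehat{\mathrm{MAJ}_k}([k])^2 = \frac{4}{4^k} \binom{k-1}{(k-1)/2}^2$. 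Now I invoke the standard central binomial estimate $\binom{2j}{j} = \Theta(4^j / \sqrt{j})$; with $j = (k-1)/2$ this gives $\binom{k-1}{(k-1)/2} = \Theta(2^{k-1}/\sqrt{k})$, hence $\binom{k-1}{(k-1)/2}^2 = \Theta(4^{k-1}/k) = \Theta(4^k / k)$. Substituting, $\widehat{\mathrm{MAJ}_k}([k])^2 = \Theta(4/k) = \Omega(1/k)$.

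That only gives $\Omega(1/k)$, which is weaker than the claimed $\Omega(1/\sqrt{k})$, so picking the single top level is too lossy — this is the main obstacle. The fix is to sum over \emph{all} odd levels $\ell$ with $\ell \ge (k+1)/2$ (or even all odd levels, since they all lie above $(k+1)/2$ is not true — but in any case including enough of them). Using Fact \ref{fact:maj}, $W_\ell(\mathrm{MAJ}_k) = \binom{k}{\ell}\widehat{\mathrm{MAJ}_k}(S)^2$ for $|S| = \ell$ odd. A cleaner alternative, which I would actually pursue, is to use Parseval: $\sum_{\ell \text{ odd}} W_\ell(\mathrm{MAJ}_k) = 1$, and to show instead that the \emph{low} levels carry only $O(1)$ minus a $\Omega(1/\sqrt k)$-bounded-away amount — but the slickest argument is the known identity that $W_\ell(\mathrm{MAJ}_k)$, as a function of odd $\ell$, is (up to normalization) the distribution of $|S|$ where $S$ is a certain random set, and $\sum_{\ell \ge (k+1)/2} W_\ell = \Theta(\sqrt{k} \cdot \text{(typical coefficient)})$. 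Concretely: it is a classical fact (O'Donnell's book, same chapter) that $\widehat{\mathrm{MAJ}_k}(S)$ for $|S|=\ell$ is, up to sign, roughly $\frac{c}{k}\cdot(\text{something}\sim k^{-1/2}\ell^{-1/2})$ near $\ell \sim \Theta(k)$ — leading each of the $\Theta(k)$ top levels to contribute $\Theta(1/k^{3/2})$, and $\sum$ over $\Theta(k)$ such levels gives $\Theta(k^{-1/2})$. So the real content of the lemma is a level-by-level estimate of $W_\ell(\mathrm{MAJ}_k)$ for $\ell$ in the range $[(k+1)/2, k]$ showing it is $\asymp 1/(\ell^{1/2} k^{1/2}) \cdot (\ell/k)^{?}$, then summing. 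I expect the cleanest writeup to estimate $\binom{k}{\ell}\widehat{\mathrm{MAJ}_k}(S)^2$ using the formula in Fact \ref{fact:maj} together with Stirling, show each term in the upper half is $\Omega(1/k^{3/2})$ uniformly over a constant fraction of levels $\ell \in [k/2, k]$, and conclude $W_{\ge(k+1)/2} = \Omega(1/\sqrt k)$ by summing the $\Theta(k)$ contributions. The main obstacle is precisely controlling the ratio of binomials $\binom{(k-1)/2}{(\ell-1)/2}/\binom{k-1}{\ell-1}$ against $\binom{k}{\ell}$ uniformly enough to get a clean per-level lower bound; Stirling's approximation handles this but needs some care near the endpoints $\ell \approx k$.
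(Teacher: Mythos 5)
Your plan arrives at essentially the same proof as the paper: after correctly discarding the single top-level bound $W_k(\mathrm{MAJ}_k)=\Theta(1/k)$ as too weak, you propose the per-level estimate $W_\ell(\mathrm{MAJ}_k)=\Omega(k^{-3/2})$ for odd $\ell\ge (k+1)/2$ via Fact~\ref{fact:maj} and Stirling, and then sum over the $\Theta(k)$ such levels, which is exactly what the paper does (it handles the uniformity in $\ell$, including the endpoint behaviour near $\ell\approx k$ that you flag, by observing that the relevant ratio of binomials is nondecreasing on that range). No gap.
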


\begin{proof}
By Fact \ref{fact:maj},
\[ W_l(\text{MAJ}_k) = \binom{k}{l} \frac{\binom{(k-1)/2}{(l-1)/2}^2}{\binom{k-1}{l-1}^2} \frac{4}{2^{2k}} \binom{k-1}{(k-1)/2}^2 = \frac{k}{l} \frac{\binom{(k-1)/2}{(l-1)/2}^2}{\binom{k-1}{l-1}} \frac{4}{2^{2k}} \binom{k-1}{(k-1)/2}^2 \]
and using $\frac{4}{2^{2k}} \binom{k-1}{(k-1)/2}^2 = \Theta(1/k)$, we obtain
\[ W_l(\text{MAJ}_k) = \Theta\left( \frac{\binom{(k-1)/2}{(l-1)/2}^2}{k\binom{k-1}{l-1}}\right) \]
for $l \ge (k+1)/2$. In the case $l=(k+1)/2$, we have $W_l(\text{MAJ}_k) = \Theta(k^{-3/2})$ using $\binom{a}{a/2} = \Theta(2^a/\sqrt{a})$ for any $a$. By Stirling's formula, $\frac{\binom{(k-1)/2}{(l-1)/2}^2}{\binom{k-1}{l-1}} \approx \sqrt{\frac{2(k-1)}{\pi(l-1)(k-l)}}$, which is nondecreasing when $l \ge (k+1)/2$, so  $W_l(\text{MAJ}_k) = \Omega(k^{-3/2})$ for $l \ge (k+1)/2$.
\end{proof}

Next, we consider the EXACT-HALF function, $g(x) = 1 \Leftrightarrow |x| = k/2$.

\begin{lem}
Let $k$ be even. Then $W_{\ge k/2}({\rm EXACT}\text{-}{\rm HALF}_k) = \Theta(1/\sqrt{k})$.
\end{lem}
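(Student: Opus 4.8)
The plan is to compute the Fourier weights of $\text{EXACT-HALF}_k$ explicitly, in analogy with the majority case, and then bound the tail sum $W_{\ge k/2}$ from below and above. First I would recall that $\text{EXACT-HALF}_k$ is symmetric, so its Fourier coefficients $\widehat{g}(S)$ depend only on $|S|$, and there is a standard closed form (via Krawtchouk polynomials, or by a direct generating-function computation) for these coefficients. Concretely, $\widehat{\text{EXACT-HALF}_k}(S) = 2^{-k} \binom{k}{k/2} K_{|S|}(k/2)$ up to normalization, where $K_j$ is the relevant Krawtchouk polynomial; alternatively one can write $g = \frac{1}{2}(\text{1's indicator pattern})$ and relate it to a derivative of $\text{MAJ}_k$ or use the identity that $\text{EXACT-HALF}_k(x) = \text{MAJ}_k(x) - \text{MAJ}_k(x')$ style telescoping in the Hamming weight. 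The cleanest route is probably to note that the ``exactly $k/2$ ones'' indicator has Fourier expansion whose level-$l$ weight is
\[
W_l(\text{EXACT-HALF}_k) = \binom{k}{l}\left( \frac{1}{2^k}\sum_{j} (-1)^{?}\binom{k/2}{?}\cdots \right)^2,
\]
and simplify the inner sum to a product of central binomial coefficients.

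Next I would extract the asymptotics. Using the same tools as in the majority lemma — namely $\binom{a}{a/2} = \Theta(2^a/\sqrt{a})$ and Stirling's formula — I expect the top-level weight $W_k(\text{EXACT-HALF}_k)$ and the near-top weights to each be $\Theta(k^{-3/2})$ or thereabouts, and the sum over the top half of the levels to telescope or dominate to $\Theta(k^{-1/2})$. The key structural fact that makes $\Theta(1/\sqrt k)$ come out (rather than something smaller) is that $\sum_l W_l = 1$ and the weight is spread fairly evenly, or is concentrated in a band of width $\Theta(\sqrt k)$ near the top, each contributing $\Theta(k^{-3/2})$. For the upper bound $W_{\ge k/2} = O(1/\sqrt k)$ I would bound $W_l$ termwise for $l \ge k/2$ by its value at $l = k/2$ times a geometric or polynomially-decaying factor, sum the geometric-type series, and get $O(k^{-3/2}) \cdot O(\sqrt k) = O(k^{-1/2})$; for the lower bound I would exhibit $\Omega(\sqrt k)$ levels each of weight $\Omega(k^{-3/2})$, exactly mirroring the monotonicity argument used for $\text{MAJ}_k$ where $\frac{\binom{(k-1)/2}{(l-1)/2}^2}{\binom{k-1}{l-1}}$ was shown to be nondecreasing past the midpoint.

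The main obstacle I anticipate is getting the exact closed form for $\widehat{\text{EXACT-HALF}_k}(S)$ cleanly and then controlling the ratio of binomials that appears, since unlike $\text{MAJ}_k$ (whose coefficients vanish on even levels and have the neat hypergeometric ratio quoted in Fact \ref{fact:maj}) the exact-half function has nonzero weight on all even levels and its coefficients are values of a Krawtchouk polynomial that oscillates in sign. The oscillation doesn't matter for $W_l$ since we square, but extracting a clean monotonicity or unimodality statement for $W_l$ as a function of $l$ on $[k/2, k]$ may require a slightly more careful Stirling estimate than in the majority case. A convenient shortcut, which I would try first, is to relate $\text{EXACT-HALF}_k$ directly to $\text{MAJ}_k$: since $\text{EXACT-HALF}_k(x) = \text{MAJ}_k(x) \oplus \text{MAJ}_k(\overline{x} \text{ shifted})$-type differences are awkward over $\F_2$, a better identity is over the reals — the real-valued indicator of $|x|=k/2$ equals $T_{\ge k/2}(x) - T_{\ge k/2+1}(x)$ where $T_{\ge t}$ is the real threshold indicator, so $\widehat{\text{EXACT-HALF}_k}(S) = \widehat{T_{\ge k/2}}(S) - \widehat{T_{\ge k/2+1}}(S)$, and both threshold functions are affine images of $\text{MAJ}$-type functions whose spectra are already understood. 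Pushing this through should let me reuse Fact \ref{fact:maj} and the preceding lemma almost verbatim, with the subtraction producing a telescoping that is responsible for the final $\Theta(1/\sqrt k)$.
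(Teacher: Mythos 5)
Your plan has a genuine gap, and it centres on a fact you never isolate: the \emph{total} Fourier weight available above level $0$ is itself only $\Theta(1/\sqrt{k})$. Writing $g$ for the $0/1$-valued indicator of $|x|=k/2$, Parseval gives $\sum_s \widehat{g}(s)^2 = \|g\|_2^2 = \binom{k}{k/2}2^{-k} = \Theta(1/\sqrt{k})$ (equivalently, in the $\pm 1$ convention the empty-set coefficient absorbs $1-\Theta(1/\sqrt{k})$ of the unit mass). Your stated heuristic that ``$\sum_l W_l = 1$ and the weight is spread fairly evenly'' would give $W_{\ge k/2}=\Theta(1)$, which is wrong, and your alternative count --- a band of $\Theta(\sqrt{k})$ levels each of weight $\Theta(k^{-3/2})$ --- sums to $\Theta(1/k)$, not $\Theta(1/\sqrt{k})$, so the lower bound you propose to ``exhibit'' does not reach the claimed order. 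The per-level guesses are also off: e.g.\ $\widehat{g}(1^k) = \pm 2^{-k}\binom{k}{k/2} = \Theta(k^{-1/2})$, so $W_k = \Theta(1/k)$, and the weight near the top is \emph{larger} than at $l=k/2$, which defeats your upper-bound strategy of decaying termwise from the value at $l=k/2$. Unlike ${\rm MAJ}_k$, the level weights here are not nondecreasing past the midpoint, so the monotonicity argument does not transfer ``almost verbatim.''

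The paper's proof sidesteps all of this in two lines: for $s\neq 0^k$, $\widehat{g}(s) = 2^{-k}\sum_i(-1)^i\binom{|s|}{i}\binom{k-|s|}{k/2-i}$ is a Krawtchouk polynomial whose square is symmetric under $|s|\mapsto k-|s|$, so at least half of the total weight $\|g\|_2^2=\Theta(1/\sqrt{k})$ sits on levels $\ge k/2$; the matching upper bound is the trivial $W_{\ge k/2}\le \|g\|_2^2$. If you want to salvage your level-by-level route, you must first establish the Parseval identity for the $0/1$ representation (or equivalently compute $\widehat{F}(\emptyset)$ for $F=(-1)^g$) and then either invoke the Krawtchouk symmetry or carry out genuinely careful asymptotics showing the top $\Theta(\sqrt{k})$ levels each carry $\Theta(1/k)$; as written, neither your lower nor your upper bound closes.
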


\begin{proof}
Let $g:\{0,1\}^k \to \{0,1\}$ be the EXACT-HALF function. It will be convenient for the proof to switch to the representation of the Fourier transform of $g$ that $\widehat{g}(s) = \frac{1}{2^k} \sum_{x \in \{0,1\}^k} (-1)^{s \cdot x} g(x)$, which is equivalent to the representation used in the rest of this paper for all $s$ such that $s \neq 0^k$, up to a constant factor. Then, for $s \neq 0^k$,
\[ \widehat{g}(s) = \sum_{x,|x|=k/2} (-1)^{x \cdot s} = \frac{1}{2^k} \sum_{i=0}^{k/2} (-1)^i \binom{|s|}{i} \binom{k-|s|}{i}, \]
where the last expression is a Krawtchouk polynomial~\cite{krasikov99}. This is symmetric about $|s|=k/2$, so
\[ 
\sum_{s,|s| \ge k/2} \widehat{g}(s)^2 \ge \frac{1}{2} \sum_s \widehat{g}(s)^2 = \frac{1}{2} \|g\|_2^2 = \Theta(1/\sqrt{k}). 
\qedhere
\]
\end{proof}

So, by the above lemmas, we reproduce the $\Theta(k^{1/4})$ complexity of Belovs' algorithms for the majority and EXACT-HALF functions up to a logarithmic factor. The algorithms are also time-efficient.

\begin{thm}
There exist quantum algorithms that learn the majority and exact-half functions on $k$-bits using $O(k^{1/4}\log k)$ queries. The time complexity is $O(nk^{1/4}\log k)$.
\end{thm}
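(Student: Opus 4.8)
The plan is to assemble the theorem directly from the lemmas already proved in this section: Lemma \ref{lem:symalg} reduces the learning task to establishing a suitable lower bound on the tail Fourier weight $W_{\ge l}(g)$, and the preceding two lemmas supply exactly such bounds for $g = \mathrm{MAJ}_k$ and $g = \mathrm{EXACT\text{-}HALF}_k$. First I would instantiate Lemma \ref{lem:symalg} with $l = (k+1)/2$ for the majority function, using $W_{\ge (k+1)/2}(\mathrm{MAJ}_k) = \Omega(1/\sqrt{k})$; plugging into the query bound $O\big((k/(l\sqrt{W_{\ge l}(g)}))\log k\big)$ gives $O\big((k / ((k/2) \cdot k^{-1/4})) \log k\big) = O(k^{1/4}\log k)$ queries. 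For the exact-half function I would take $l = k/2$ and use $W_{\ge k/2}(\mathrm{EXACT\text{-}HALF}_k) = \Theta(1/\sqrt{k})$, which by the identical arithmetic again yields $O(k^{1/4}\log k)$ queries. (One should note the minor bookkeeping point that Lemma \ref{lem:symalg} is stated for the ``$\pm 1$-normalised'' Fourier transform used throughout the paper, whereas the exact-half weight lemma switches normalisation; since the two agree up to a constant factor for $s \neq 0^k$ and the $s = 0^k$ coefficient is irrelevant to identifying $S$, the bound transfers with at most a constant-factor loss, which is absorbed into the $O(\cdot)$.)

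Second, I would address the time complexity. The only nontrivial operations in the algorithm underlying Lemma \ref{lem:symalg} are: Fourier sampling on $f$, which costs one query plus $O(n)$ elementary gates for the two layers of Hadamards; checking whether the measured subset has size $\ge l$, which is $O(n)$ classical postprocessing; amplitude amplification wrapped around this, contributing the $O(1/\sqrt{W_{\ge l}(g)}) = O(k^{1/4})$ factor; and the outer repetition loop of $O((k/l)\log k) = O(\log k)$ rounds together with the final union of returned subsets, which is $O(n\log k)$ classical work. Multiplying the per-query gate cost $O(n)$ by the total query count $O(k^{1/4}\log k)$ gives runtime $O(nk^{1/4}\log k)$, matching the claim; I would state this explicitly rather than leaving it to the reader.

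I do not expect a genuine obstacle here — the theorem is essentially a corollary of the three lemmas already established — so the ``hard part'' is really just presentational care: making sure the substitution $l = (k+1)/2$ (odd $k$) versus $l = k/2$ (even $k$) is handled cleanly in both cases, being explicit that amplitude amplification preserves the key structural fact that every sampled subset lies inside $S$ (so the union never over-reports), and flagging the normalisation caveat for the exact-half case. If one wanted to be maximally careful, a sentence confirming that the success probabilities compose correctly (each of the two lemmas already delivers probability $\ge 0.99$, and Lemma \ref{lem:symalg}'s $\delta$ can be taken to be a small constant like $0.01$) would round out the argument. Everything else is the routine arithmetic displayed above.
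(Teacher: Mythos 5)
Your proposal is correct and matches the paper's (implicit) argument exactly: the paper derives this theorem as an immediate corollary of Lemma \ref{lem:symalg} together with the two Fourier-weight lower bounds $W_{\ge (k+1)/2}(\mathrm{MAJ}_k)=\Omega(1/\sqrt{k})$ and $W_{\ge k/2}(\mathrm{EXACT}\text{-}\mathrm{HALF}_k)=\Theta(1/\sqrt{k})$, via precisely the substitution $l\approx k/2$ that you carry out. Your additional bookkeeping on the time complexity and the normalisation caveat is sound and only makes explicit what the paper leaves to the reader.
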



Finally, we observe a simple general approach which can be used to solve the learning problem for almost all functions efficiently.
Define the influence of the $j$'th variable as
\[ \Inf_j(g) = \sum_{T \ni j} \widehat{g}(T)^2 = \Pr_{x \in \{0,1\}^k} [g(x) \neq g(x^j)], \]
where $x^j$ is the bit-string equal to $x$ with its $j$'th bit flipped.

\begin{prop}[essentially At{\i }c{\i } and Servedio~\cite{atici05}]
\label{lem:highinf}
Assume that, for all $j \in S$, $\Inf_j(g) \ge \epsilon$. Then there is a quantum algorithm which identifies $S$ with probability $1-\delta$ using $O(\epsilon^{-1} \log (k/\delta))$ queries to $f$.
\end{prop}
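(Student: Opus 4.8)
\textbf{Proof plan for Proposition~\ref{lem:highinf}.}
The plan is to reuse the Fourier-sampling primitive exactly as in the proof of Lemma~\ref{lem:symalg}, but now exploit the influence lower bound instead of symmetry. First I would recall that one query to $f$ produces the state $\ket{\psi_f} = \sum_{T \subseteq [n]} \widehat{f}(T) \ket{T}$, and that measuring it in the computational basis returns a subset $T$ distributed as $\Pr[T] = \widehat{f}(T)^2 = \widehat{g}(T)^2$ (the latter equality because $f$ depends only on the coordinates in $S$, so $\widehat f$ is supported on subsets of $S$ and agrees with $\widehat g$ there under the identification $S \cong [k]$). In particular every sample $T$ satisfies $T \subseteq S$ with certainty, so the algorithm never outputs a spurious variable; the only risk is failing to \emph{cover} some $j \in S$.

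The key observation is that, for a fixed $j \in S$, the probability that a single Fourier sample contains $j$ is exactly $\sum_{T \ni j} \widehat{g}(T)^2 = \Inf_j(g) \ge \epsilon$ by hypothesis. Hence if we draw $r$ independent Fourier samples and take the union $U = T_1 \cup \dots \cup T_r$, then for each fixed $j \in S$,
\[
\Pr[j \notin U] = (1 - \Inf_j(g))^r \le (1-\epsilon)^r \le e^{-\epsilon r}.
\]
A union bound over the $k$ elements of $S$ gives $\Pr[U \neq S] \le k e^{-\epsilon r}$, so taking $r = O(\epsilon^{-1} \log(k/\delta))$ suffices to make this at most $\delta$. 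Each sample costs one query to $f$, giving the claimed query complexity $O(\epsilon^{-1}\log(k/\delta))$; the time complexity is $O(n)$ per sample for the Hadamard transforms, which I would mention but which is not asserted in the statement.

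Since no amplitude amplification is needed here, this argument is essentially just the coupon-collector bound applied to Fourier samples, and there is no real obstacle — the proof is a couple of lines. The only point requiring a word of care is the identification between $\widehat f$ restricted to subsets of $S$ and $\widehat g$ on $\{0,1\}^k$, and the fact that $f$'s irrelevant variables force $\widehat f(T) = 0$ whenever $T \not\subseteq S$; both are standard and already used in Lemma~\ref{lem:symalg}. This is why the proposition is attributed (essentially) to At{\i}c{\i} and Servedio~\cite{atici05}: it is the basic quantum-learning-by-Fourier-sampling observation, specialized to the promise that all relevant variables have non-negligible influence.
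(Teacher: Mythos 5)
Your proposal is correct and follows essentially the same argument as the paper: Fourier sampling returns $T \subseteq S$ with probability $\widehat{g}(T)^2$, a fixed $j \in S$ appears in each sample with probability $\Inf_j(g) \ge \epsilon$, and a union bound over the $k$ relevant variables shows $O(\epsilon^{-1}\log(k/\delta))$ samples suffice. The additional remarks on the identification of $\widehat{f}$ with $\widehat{g}$ are fine but not needed beyond what the paper already states.
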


\begin{proof}
We apply Fourier sampling to $f$, which returns a subset $T \subseteq [k]$ with probability $\widehat{g}(T)^2$. We use this subroutine $q$ times and output the union of the subsets of variables returned. The probability that the $j$'th variable is included in each sample is $\Inf_j(g) \ge \epsilon$. The probability that there exists a variable that is not returned after the $q$ queries is at most $k(1-\epsilon)^q \le ke^{-q\epsilon}$. So it is sufficient to take $q = O(\epsilon^{-1} \log (k/\delta))$ to learn all the variables except with probability $\delta$.
\end{proof}

If $g$ is picked at random, then for all $j$, $\Inf_j(g)$ is lower-bounded by a constant with high probability. So, by Proposition \ref{lem:highinf}, for almost all functions $g$, there is a quantum algorithm that identifies $S$ using $O(\log k)$ queries and succeeds with probability 0.99. This holds even if $g$ is unknown, and is an exponential improvement over the optimal classical complexity.


\section{Outlook}

We have seen that quantum algorithms can achieve relatively large speedups over their classical counterparts for learning unknown graphs in a variety of different models. Many of the results we obtained are tight up to logarithmic factors, but some larger gaps remain.
In particular, for learning Hamiltonian cycles and matchings using OR queries, the quantum lower bound is $\Omega(\sqrt{m})$~\cite{ambainis14b}, while our upper bounds are $\widetilde{O}(m^{3/4})$.


\section*{Acknowledgments}

We would like to thank Jo\~{a}o Doriguello and Ryan Mann for helpful discussions on the topic of this work. We acknowledge support from the QuantERA ERA-NET Cofund in Quantum Technologies implemented within the European Union’s Horizon 2020 Programme (QuantAlgo project) and EPSRC grants EP/R043957/1 and EP/T001062/1. This project has received funding from the European Research Council (ERC) under the European Union’s Horizon 2020 research and innovation programme (grant agreement No.\ 817581).


\bibliographystyle{plain}
\bibliography{main}

\end{document}